\theoremstyle{plain}
\newtheorem{lem}{Lemma}
\newtheorem{thm}{Theorem}
\theoremstyle{definition}
\newtheorem{Defi}{Definition}
\newcommand{\whereisliteral}{\textbf{WhereIsTheLiteral}}
\newcommand{\bestarmident}{\textit{best block identification}}
\newcommand{\specialblock}{\textit{$\gamma$-special block}}
\date{}
\title{Near-Optimal Algorithm for Distribution-Free Junta Testing}
\author{
  Xiaojin Zhang\textsuperscript{\rm 1}\\
  \textsuperscript{\rm 1}The Hong Kong University of Science and Technology\\
  xiaojinzhang@ust.hk}
\begin{document}

\maketitle

\begin{abstract}
    We present an adaptive algorithm with one-sided error for the problem of junta testing for Boolean function under the challenging distribution-free setting, the query complexity of which is $\widetilde O(k)/\epsilon$. This improves the upper bound of $\widetilde O(k^2)/\epsilon$ by \cite{liu2019distribution}. From the $\Omega(k\log k)$ lower bound for junta testing under the uniform distribution by \cite{sauglam2018near}, our algorithm is nearly optimal. In the standard uniform distribution, the optimal junta testing algorithm is mainly designed by bridging between relevant variables and relevant blocks. At the heart of the analysis is the Efron-Stein orthogonal decomposition. However, it is not clear how to generalize this tool to the general setting. Surprisingly, we find that junta could be tested in a very simple and efficient way even in the distribution-free setting. It is interesting that the analysis does not rely on Fourier tools directly which are commonly used in junta testing.
\end{abstract}

\section{Introduction}
Property testing of Boolean functions dates back to the seminal work of \cite{blum1993self, rubinfeld1996robust}. Various properties of Boolean functions have been investigated in the succeeding works, for example,  \cite{alon2005testing, blais2012property, baleshzar2016testing, belovs2016polynomial, bhattacharyya2010optimal, sauglam2018near}. Junta is an important property that is widely used in the machine learning setting (\cite{goldreich2010property}). A function $f:\{0,1\}^n\rightarrow \{0,1\}$ is referred to as a $k$-junta if it depends on at most $k$ variables. The Junta tester is used as a basic building block for testing various properties of Boolean functions, such as function isomorphism (\cite{fischer2004testing}), halfspaces (\cite{matulef2010testing}), and concise representations (\cite{diakonikolas2007testing, servedio2010testing}). It is therefore very motivating to design efficient algorithms to distinguish between $k$-junta and those far from every $k$-junta.

The problem of junta testing was firstly introduced by \cite{parnas2002testing}, and \cite{fischer2004testing} provided an algorithm that uses $\widetilde O(k^2)/\epsilon$ queries. In the uniform distribution framework, an $\Omega(\log k)$ lower bound was introduced by \cite{fischer2004testing} for adaptive testing, and was later improved to $\Omega(k)$ by \cite{chockler2004lower}. \cite{sauglam2018near} further improved the lower bound to $\Omega(k\log k)$. \cite{blais2008improved} proposed a non-adaptive algorithm with query complexity $\widetilde O(k^{3/2})/\epsilon$. \cite{blais2009testing} further presented an adaptive algorithm that uses $\widetilde O(k)/\epsilon$ queries, which achieves optimal query complexity.

Distribution-free property testing is very attractive since it allows an unknown and arbitrary distribution, which is more applicable in some cases than the uniform distribution. In this setting, the Boolean function is unknown. Moreover, the distribution that measures distance is also unknown and arbitrary, making the testing problem very challenging. One might therefore conjecture that the lower bound of query complexity is exponential instead of being polynomial. Surprisingly, \cite{liu2019distribution} showed that a polynomial adaptive algorithm for this setting exists, despite an $\Omega(2^{k/3})$ lower bound for any non-adaptive algorithm. The query complexity of the adaptive algorithm with one-sided error introduced by \cite{liu2019distribution} is $\widetilde O(k^2)/\epsilon$.  \cite{bshouty2019almost} further proposed an adaptive algorithm with two-sided error for distribution-free junta testing, which uses $\widetilde O(k/\epsilon)$ queries and is near-optimal. The interesting problem of whether there exists a one-sided distribution-free adaptive algorithm with query complexity $\widetilde O(k)$ remains open.

In the uniform setting, \cite{blais2009testing} showed that if $f$ is $\epsilon$-far from being a $k$-junta, then $f$ is $\epsilon/2$-far from 
$k$-part with high probability. Here far from $k$-part means far from being determined by the union of coordinates in at most $k$ parts in a random partition of the coordinates. This property plays a key role in the proposal of the optimal adaptive algorithm. In the very challenging distribution-free setting, it is not clear whether this property still holds. Instead, \cite{liu2019distribution} showed that if $f$ is $\epsilon$-far from being a $k$-junta, then $f$ is $\epsilon/2$-far from $k$-coordinate with high probability. They designed an adaptive algorithm with $\widetilde O(k^2)/\epsilon$ query complexity based on this weaker property. This advancement in distribution-free junta testing leads to a series of natural but challenging problems. Could the upper bound of $\widetilde O(k^2)/\epsilon$ be improved to $\widetilde O(k)/\epsilon$ to match the lower bound from the uniform distribution setting? How to prove it without relying on the Fourier tools? Is the lower bound for adaptive junta testing under the distribution-free setting significantly larger than that under the uniform distribution?

The definition of $k$-junta relies on the total number of relevant variables. In order to design an algorithm with query complexity independent of $n$, the commonly used approach is to divide $[n]$ into $\text {poly} (k)$ blocks. In the standard uniform setting, \cite{blais2009testing} show the effectiveness of the reduction between $k$-variables and $\text {poly} (k)$ blocks using the Efron-Stein orthogonal decomposition, and an optimal algorithm with query complexity $\widetilde O(k)/\epsilon$ is consequently obtained. Given the optimal query complexity of \cite{blais2009testing}, it seems to be a natural way to try to generalize the argument to the distribution-free setting, and prove the correctness of the argument without relying on the Efron-Stein orthogonal decomposition (\cite{xie2018property}). However, it turns out to be very challenging to conduct a similar analysis to the distribution-free setting. \cite{blais2015partially} further proposed another analysis approach which relies on the sub-additivity of influence and the property of intersecting family. However, the properties required by this analysis might not hold in an arbitrary distribution.

Usually, some stronger structural properties accompanied with complicated analysis are necessary to achieve better query complexity. The analysis of junta testing relies on the Fourier tools in most prior works. By investigating this challenging distribution, we are forced to design algorithms that could be analyzed without using the commonly used Fourier tools. This pursues us to explain the problem of junta testing from another point of view. It is amazing to find that near-optimal query complexity could be achieved simply with some properties obtained by \cite{liu2019distribution}. This improvement is achieved by viewing the problem of finding the part containing the literal as the problem of $\bestarmident$, reusing some samples appropriately, and designing a stopping condition that could be triggered earlier. To the best of our knowledge, this is the first algorithm that achieves near-optimal query complexity with one-sided error.



\subsection{Our Main Results}
We give an adaptive algorithm with one-sided error for the distribution-free junta testing. The query complexity of our proposed algorithm is nearly optimal.
\begin{thm}
If $f$ is $\epsilon$-far from every $k$-junta under the distribution $\mathcal {D}$, then there exists an algorithm rejects with probability at least $2/3$. The query complexity of the algorithm could be upper bounded by $\widetilde O(k)/\epsilon$.
\end{thm}

\cite{liu2019distribution} gave an adaptive algorithm for distribution-free junta testing with one-sided error, the query complexity is $\widetilde O(k^2)/\epsilon$. \cite{bshouty2019almost} further improves the query complexity to $\widetilde O(k/\epsilon)$, but with two-sided error. Our algorithm is adaptive and has one-sided error. This upper bound essentially reaches near-optimal query complexity, compared with the adaptive uniform-distribution lower bound for junta testing (there is no better lower bound for adaptive distribution-free junta testing).

\begin{table}[!htp]
  \centering
  \caption{A summary of works on junta testing}
  \label{tab: three environments}
    \begin{tabular}{ccccc}
    \toprule
    \hline
    Distribution & Algorithm & Query complexity & Type of algorithm & Type of error\cr
    \midrule
  Product & \cite{blais2009testing} & $O(k\log k)/\epsilon$ & adaptive& one-sided\cr
  Uniform & \cite{sauglam2018near} & $\Omega(k\log k)$ & adaptive & one-sided\cr
  Arbitrary & \cite{liu2019distribution} & $\Omega(2^{k/3})$ & non-adaptive& one-sided\cr
  Arbitrary & \cite{liu2019distribution} & $\widetilde O(k^2)/\epsilon$ & adaptive& one-sided\cr
  Arbitrary & \cite{bshouty2019almost} & $\widetilde O(k)/\epsilon$ & adaptive & two-sided\cr
  Arbitrary & This work & $\widetilde O(k)/\epsilon$ & adaptive & one-sided\cr
     \hline
    \bottomrule
    \end{tabular}
\end{table}

\subsection{Technical Overview}



We firstly introduce the development of the algorithms from uniform setting to distribution-free setting. Readers who are familiar with the previous work of \cite{blais2009testing} and \cite{liu2019distribution} may feel free to skip the following two paragraphs.

In the uniform setting, a key property for the proposal of optimal adaptive algorithm is: if $f$ is $\epsilon$-far from being a $k$-junta, then $f$ is $\epsilon/2$-far from $k$-part with high probability. A simple algorithm is proposed based on this property, which achieves nearly-optimal query complexity. The algorithm randomly divides $[n]$ into $10^{20}k^9/\epsilon^5$ blocks. Given $x,y\in \{0,1\}^n$, $(x,y)$ is referred to as a distinguishing pair of $f$ for block $B$ if $f(x)\neq f(y)$, and $x_{\bar B} = y_{\bar B}$. Suppose the algorithm finds $m$ relevant blocks. Fixing these blocks, the algorithm could find a distinguishing pair $(x,y)$ in $O(1/\epsilon)$ number of queries with high probability. This could be easily derived from the aforementioned property. Note that the coordinates of $x$ and $y$ are identical in the $m$ relevant blocks. The algorithm then uses binary search over blocks instead of over coordinates, to find a new relevant block. The resulting query complexity is $\widetilde O(k)/\epsilon$.

In the distribution-free setting, it is not clear whether this property still holds. Instead, \cite{liu2019distribution} designed an adaptive algorithm with $\widetilde O(k^2)$ query complexity based on a weaker property under this challenging setting. The property is: if $f$ is $\epsilon$-far from being a $k$-junta, then $f$ is $\epsilon/2$-far from $k$-coordinate (see Defi.~\ref{defi: far_from_k_coordinate}) with high probability. Suppose the algorithm has found $m (m\le k)$ relevant blocks. If all the relevant blocks are close to a literal under the uniform distribution, then the algorithm randomly partitions each relevant block into two parts. Given a randomly sampled string $x$, the algorithm constructs $y$ by fixing the coordinates of $x$ in each block that contains the literal, fixing the coordinates of $x$ in a random part of the complement of all the relevant blocks, and flipping all the remaining coordinates. This procedure is equivalent to fixing $m (m\le k)$ literals of $x$, and each of the remaining coordinate is equal to $0$ with probability $1/2$, and is equal to $1$ with probability $1/2$. With the aforementioned property, a distinguishing pair could be found in $O(1/\epsilon)$ number of queries. Using binary search over blocks, a new relevant block could be found, and the total number of relevant blocks increases by $1$. If one of the relevant blocks is far from every literal under the uniform distribution, then the original relevant block is divided into two new relevant blocks, and the total number of relevant blocks increases by $1$.

One advantage of our algorithm is the proposal of the algorithm FindLiteral. It improves over the algorithm WhereIsTheLiteral of \cite{liu2019distribution} by leveraging the requirement of closeness. The design of the subroutine FindLiteral is inspired by viewing the connection between the problem of finding the block that contains the literal and the problem of $\bestarmident$. Suppose that $g: \{0,1\}^B \rightarrow \{0,1\}$ is $\gamma$-close to literal $l$ under the uniform distribution, and the block $B$ is randomly divided into two blocks. The goal of FindLiteral is to identify which block the literal lies in and return the corresponding distinguishing pair. We regard that each block is associated with an unknown expected reward. The expected reward of each block is defined as the probability of finding a distinguishing pair by flipping the coordinates in this block. The expected reward of the block that contains the literal is at least $1-2\gamma$, and the expected reward of another block is at most $2\gamma$. Setting $\gamma = 1/8$, the reward of one block is strictly larger than that of another block. Then the problem of locating the literal could be reduced to the problem of identifying the block with the highest expected reward. For the problem of identifying the best block between two blocks, we design an algorithm that is easy to understand intuitively and simple to analyze, which uses the standard "success amplification by majority" technique. With probability at least $1-1/\Theta(k)$, this algorithm identifies the best block within $O((1-\gamma)^{-2}\log k)$ number of samples.

Recall that our goal is to design an algorithm which rejects with high constant probability when $f$ is $\epsilon$-far from every $k$-junta. The subroutine FindLiteral is required to find the correct part that contains the literal with probability at least $1-1/(\Theta(k))$, to further deal with the union bound argument over all $k$ blocks. A natural approach is to set $\gamma = 1/(8k)$, as was proposed by \cite{xie2018property}. By reducing the problem of locating the literal to the problem of identifying the block with the highest reward, our newly proposed subroutine FindLiteral achieves the same guarantee with a leveraged requirement of the closeness (measured by $\gamma$) between the function restricted to this block and a literal. As a result, we relax the closeness requirement from $\gamma = 1/(8k)$ to $\gamma = 1/8$.

The relaxation of closeness requirement further facilitates reducing the query complexity of literal testing. Using the algorithm for junta testing under the uniform distribution proposed by \cite{blais2009testing}, it requires $O(1/\gamma)$ number of queries to perform literal testing. Therefore, the improvement of the parameter $\gamma$ from $1/(8k)$ to $1/8$ saves a factor of $k$ in the upper bound of query complexity for literal testing. On the other hand, the subroutine FindLiteral requires $O((1-\gamma)^{-2}\log k)$ queries. Setting $\gamma$ as a constant, the number of queries required is $O(\log k)$. 



The other advantage of our algorithm is achieved by the observation that we only need to locate the literal for \textit{the to-be-orientated} $\specialblock$, instead of locating the literal for \textit{all the} $\specialblock$\textit{s} in all repetitions. Besides, we design a new stopping condition
to accommodate the strategy of reusing samples. Specifically, our algorithm iterates for a total of $k+1$ repetitions, if the algorithm fails to increase the number of relevant blocks by $1$ within $\widetilde\Theta(1)/\epsilon$ iterations, it directly terminates and accepts. As a comparison, the algorithm proposed by \cite{liu2019distribution} iterates for $\widetilde\Theta(k^2)/\epsilon$ times, and determines whether to accept if it fails to find more than $k$ relevant blocks after all the iterations. Consequently, we improve the query complexity of the adaptive one-sided tester from $\widetilde O(k^2)/\epsilon$ to the optimal $\widetilde O(k)/\epsilon$.

\section{Preliminaries}
\noindent Let $f: \{0,1\}^n\rightarrow\{0,1\}$ be a Boolean function. Let $[n]$ represent $\{1,2,\dots,n\}$. A nonempty subset of $[n]$ is also referred to as a \textbf{block}. Given $x,y\in \{0,1\}^n$, $(x,y)$ is referred to as a \textbf{distinguishing pair} of $f$ for block $B$ if $f(x)\neq f(y)$, and $x_{\bar B} = y_{\bar B}$. For a subset $B\subset [n]$, let $\bar B$ denote the complement of $B$, i.e., $\bar B = [n]\setminus B$. Let $x^{(B)}$ represent the string obtained from $x$ with coordinates in $B$ flipped. The string $S = {x^1}_{B_1}{x^2}_{B_2}\dots {x^m}_{B_m}$ represents the string that is equal to $x^i$ over coordinates in $B_i$, for all $i = 1, 2, \dots, m$. If there exist $x, y\in\{0,1\}^n$, satisfying that $f(x)\neq f(y)$, $x$ differs from $y$  in one coordinate ($y = x^{(i)}$), then $i$ is referred to as a \textbf{relevant variable}. If there exist $x, y\in\{0,1\}^n$, satisfying that $f(x)\neq f(y)$, and $ y = x^{(B)}$, then $B$ is referred to as a \textbf{relevant block}. Let $f, g : \{0,1\}^n\rightarrow\{0,1\}$, $dist_{\mathcal D}(f,g)= \Pr_{x\sim\mathcal{D}}(f(x)\neq g(x))$ is used to measure the distance between $f$ and $g$ under the distribution $\mathcal{D}$. A function $f$ is a $k$-junta if it has at most $k$ relevant variables. Let $\mathcal{J}_k$ denote the class of $k$-juntas. $dist_{\mathcal D}(f,\mathcal{J}_k) = \min_{g\in\mathcal{J}_k} dist_{\mathcal D}(f,g)$ is used to measure the distance between $f$ and $k$-junta functions under the distribution $\mathcal{D}$. If $dist_{\mathcal D}(f,\mathcal{J}_k)\ge\epsilon$, then $f$ is said to be \textbf{$\epsilon$-far from every $k$-junta} under the distribution $\mathcal{D}$; If $dist_{\mathcal D}(f,\mathcal{J}_k)\le\epsilon$, then $f$ is said to be \textbf{$\epsilon$-close to $k$-junta} under the distribution $\mathcal{D}$. For a given Boolean function $f$ and string $x\in \{0,1\}^n$, we say block $B$ belongs to class of functions $\mathcal{C}$ if $g(u) = f(u_Bx_{\bar B})$ belongs to class of functions $\mathcal{C}$.

For a given Boolean function $f$ and string $x\in \{0,1\}^n$, we say block $B$ is $\gamma$-close to a literal under the uniform distribution if $dist_{\mathcal{U}} (g, z)\le\gamma$, where $g(u) = f(u_Bx_{\bar B})$ and $z(u)$ only depends on a literal. If a block contains at least one relevant variable, and is $\gamma$-close to a literal under the uniform distribution, then this block is called a \textbf{$\gamma$-special block}, and the corresponding literal is referred to as \textbf{special literal}. A tester with \textbf{one-sided error} under the distribution $\mathcal{D}$ is a randomized algorithm that accepts if $f$ is $k$-junta, and rejects with probability at least $2/3$ if $f$ is $\epsilon$-far from every $k$-junta under the distribution $\mathcal{D}$. A tester with \textbf{two-sided error} under the distribution $\mathcal{D}$ is a randomized algorithm that accepts  with probability at least $2/3$ if $f$ is $k$-junta, and rejects with probability at least $2/3$ if $f$ is $\epsilon$-far from every $k$-junta under the distribution $\mathcal{D}$.

\begin{Defi}[One-sided distribution-free testing algorithm for $k$-junta]
 Given as input a distance parameter $\epsilon>0$ and oracle access to a pair $(f, \mathcal{D})$, a randomized algorithm $A$ is referred to as a one-sided distribution-free testing algorithm for $k$-junta if it satisfies:
\begin{itemize}
\item If $f$ is a $k$-junta, then $A$ accepts.
\item If $f$ is $\epsilon$-far from every $k$-junta with respect to $\mathcal{D}$, then $A$ rejects with probability at least $2/3$.  
\end{itemize}

The query complexity of a distribution-free testing algorithm is the number of queries made on $f$ plus the number of samples drawn from $\mathcal{D}$.
\end{Defi}

\begin{Defi}[$\epsilon$-far from $k$-part]\label{defi: far_from_k_part}
Let $\mathcal{I}$ be a partition of $[n]$. $f$ is \textbf{$\epsilon$-far from $k$-part} with respect to $\mathcal{I}$ under distribution $\Omega$, if for every set $J$ formed by taking the union of $k$ parts in $\mathcal{I}$, $\mathbb{V}_{\Omega, \Omega}(\bar J) = \Pr_{x\sim\Omega, w\sim\Omega}[f(x) \neq f(x_{J}w_{\bar{J}})]\ge\epsilon$.
\end{Defi}

\begin{Defi}[$\epsilon$-far from $k$-coordinate]\label{defi: far_from_k_coordinate}
We say $f$ is \textbf{$\epsilon$-far from $k$-coordinate} with respect to $[n]$ under distribution $\Omega$, if for every set $J$ formed by taking the union of $k$ coordinates in $[n]$, $\mathbb{V}_{\Omega, \Omega}(\bar J) = \Pr_{x\sim\Omega, w\sim\Omega}[f(x) \neq f(x_{J}w_{\bar{J}})] \ge\epsilon$.
\end{Defi}

\section{A tester for distribution-free junta}

\subsection{Problem Statement}


We aim at designing a one-sided distribution-free algorithm for testing the property of being $k$-juntas over Boolean functions. Specifically, if $f: \{0,1\}^n\rightarrow \{0,1\}$ is $\epsilon$-far from every $k$-junta under the distribution $\mathcal{D}$, then the algorithm rejects with probability at least $2/3$; If $f$ is a $k$-junta, then the algorithm accepts.  
\textbf{}

The algorithm is allowed to draw samples from distribution $\mathcal{D}$ and also from the uniform distribution $\mathcal{U}$. The key distinction of this problem from the commonly investigated setting is that the distance is now measured in terms of an unknown and arbitrary distribution $\mathcal{D}$. When the distance is measured under different distributions, the junta class that a Boolean function belongs to might also change. Let us take the Boolean function $f$ in Figure \ref{fig_distance} as an example. $f$ is $1/4$-far from every $1$-junta under the uniform distribution, while it is a $1$-junta under the distribution $\mathcal{D}$. 

\begin{figure}[h]
\centering
\includegraphics[width = 0.6\columnwidth]{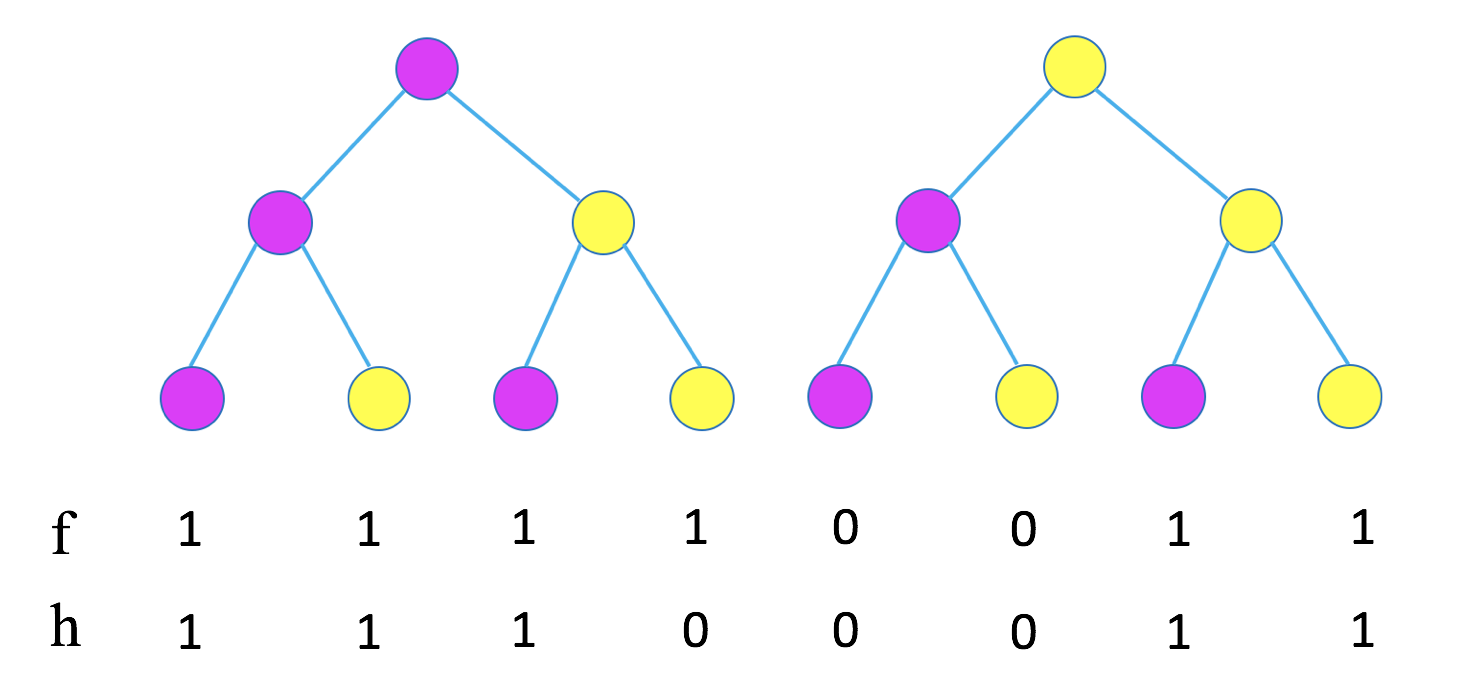}
\caption{The pink node represents bit $0$, and the yellow node represents bit $1$. Each branch represents a string $x\in \{0,1\}^3$, and the value corresponding to this branch represents the function value, which is either $0$ or $1$. The first branch and the corresponding value represents that $f(000) = 1$, and the third branch and the corresponding value represents that $f(010) = 1$. Let all the branches each assigned with equal probability represent the uniform distribution, and the first four branches each assigned with equal probability represent distribution $\mathcal{D}$. $f$ is $1/4$-far from every $1$-junta under the uniform distribution, while it is a $1$-junta under the distribution $\mathcal{D}$.}
\label{fig_distance}
\end{figure}

\subsection{Overview of previous approaches}


\cite{blais2009testing} introduced an optimal algorithm under the product distribution. They focus on the functions of the form $f: \mathcal{X}\rightarrow\mathcal{Y}$, where $\mathcal {X} = \mathcal {X}_1 \times\cdots\times\mathcal {X}_n$, and $\mathcal{Y}$ is an arbitrary finite set. Let $\Omega = \Omega_1\times \Omega_2\dots\times\Omega_n$ be a product distribution over $\mathcal {X}$, where $\Omega_i = (\mathcal{X}_i, \mu_i)$, and $\mu_i$ is an arbitrary probability measure on $\mathcal{X}_i$. The coordinates in $[n]$ are randomly initially partitioned into a total of $m$ disjoint blocks, where $m = \text{poly} (k/\epsilon)$. The algorithm keeps on finding a new pair of strings with distinct function values by sampling from the distribution $\Omega$, with the coordinates in the relevant blocks fixed. Once such pair of strings is found, the algorithm uses binary search over blocks to find a block that contains at least one relevant variable.

The core structural lemma introduced by \cite{blais2009testing} is as follows:
\begin{lem}[\cite{blais2009testing}]\label{Junta_Lem1}
Let $s = 10^{20}k^9/\epsilon^5$. 
Let $\mathcal I$ be a random partition of $[n]$ into $s$ parts, which is obtained by uniformly and independently assigning each coordinate to a part. For a function $f: \mathcal{X}\rightarrow\mathcal{Y}$, where $\mathcal {X} = \mathcal {X}_1 \times\cdots\times\mathcal {X}_n$, and $\mathcal{Y}$ is an arbitrary finite set. If $f$ is $\epsilon$-far from being a $k$-junta, then with probability at least $5/6$, f is \textbf{$\epsilon/2$-far from $k$-part} with respect to $\mathcal I$.
\end{lem}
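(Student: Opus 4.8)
The plan is to translate everything into the language of the variation functional $\mathbb{V}_{\Omega,\Omega}$ together with the orthogonal (Efron--Stein) decomposition of $f$, which is legitimate here precisely because $\Omega$ is a product measure, and then to control the Fourier weight of $f$ according to the sizes of the sets carrying it (essentially reconstructing the argument of \cite{blais2009testing}). Recall three standard facts about $\mathbb{V}=\mathbb{V}_{\Omega,\Omega}$: it is monotone and subadditive in its argument, and for any coordinate set $K$ the quantity $\min\{\mathrm{dist}_{\Omega}(f,g): g \text{ depends only on } K\}$ lies between $\tfrac12\mathbb{V}(\bar K)$ and $\mathbb{V}(\bar K)$. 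Hence ``$f$ is $\epsilon$-far from $k$-junta under $\Omega$'' is equivalent, up to a factor $2$, to ``$\mathbb{V}(\bar K)\ge\epsilon$ for every $|K|\le k$'', while the conclusion to be proved is exactly ``$\mathbb{V}(\bar J)\ge\epsilon/2$ for every union $J$ of $k$ parts of $\mathcal{I}$''. Writing $W_T=\|f^{=T}\|_{\Omega}^{2}$ for the weights of the decomposition $f=\sum_T f^{=T}$, one has $\mathbb{V}(\bar J)=2\sum_{T\not\subseteq J}W_T$ (the sum ranging over nonempty $T$), so the goal becomes: with probability at least $5/6$ over $\mathcal{I}$, $\sum_{T\not\subseteq J}W_T\ge\epsilon/4$ holds simultaneously for all unions $J$ of $k$ parts, whereas the hypothesis already gives $\sum_{T\not\subseteq K}W_T\ge\epsilon/2$ for every $|K|\le k$.

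A term-by-term union bound over the $\binom{s}{k}\approx s^{k}$ choices of $J$ is hopeless: for a fixed $J$ the bad event can have probability as large as $(k/s)^{\Theta(k)}$ (take $f$ a parity of at most $k$ coordinates), so the collective structure of the bad events must be used. I would split $\sum_T W_T=\sum_{|T|>k}W_T+\sum_{|T|\le k}W_T$. For a nonempty $T$ with $W_T>0$ let $p_{\mathcal{I}}(T)$ be the number of parts of $\mathcal{I}$ that meet $T$; since $T\subseteq J$ for a union of $k$ parts forces $p_{\mathcal{I}}(T)\le k$, we always have $\sum_{T\not\subseteq J}W_T\ge\sum_{p_{\mathcal{I}}(T)>k}W_T$, a quantity independent of the choice of $J$.

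The high-arity part is routine. For $|T|>k$, throwing $|T|$ balls into the $s=10^{20}k^{9}/\epsilon^{5}$ parts, the probability that they occupy at most $k$ parts is $O(k^{2}/s)$ uniformly in $|T|$ — the binding case $|T|=k+1$ is the birthday estimate $\le\binom{k+1}{2}/s$, and larger $|T|$ give far smaller probabilities. Hence $\mathbb{E}\big[\sum_{|T|>k,\,p_{\mathcal{I}}(T)\le k}W_T\big]=O(k^{2}/s)$, and by Markov this sum is below $\epsilon/8$ with probability at least $5/6$; on that event every $k$-part union $J$ satisfies $\sum_{T\not\subseteq J}W_T\ge\sum_{|T|>k}W_T-\epsilon/8$. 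So if the high-arity weight alone is at least $3\epsilon/8$, we are done.

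The real work is the complementary regime, where essentially all of $f$'s variation sits on sets with $|T|\le k$; there the previous step is vacuous and one must rule out a $k$-part union capturing much more low-arity weight than the best $k$ coordinates can. My approach would be a scale separation: the \emph{heavy} relevant sets (those with $W_T\ge\delta$ for a threshold $\delta=\mathrm{poly}(\epsilon/k)$) involve at most $\mathrm{poly}(k/\epsilon)$ coordinates in total, forming a set $C$; choosing $s\gg|C|^{2}$ makes $\mathcal{I}$ place the coordinates of $C$ in pairwise distinct parts with probability $\ge5/6$, so for any $k$-part union $J$ the set $J\cap C$ has at most $k$ coordinates, and the hypothesis applied to $J\cap C$ bounds the heavy weight that $J$ contains. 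The \emph{light} relevant sets are handled by bounding, uniformly over $J$, the light weight a union of $k$ parts can contain, which reduces to controlling the maximum ``part-degree'' of the incidence structure between the parts and the light sets (another balls-into-bins plus concentration estimate) and costs $s$ a further polynomial factor. Calibrating the arity cutoff, the threshold $\delta$, and all of the $O(k^{2}/s)$ and $|C|^{2}/s$ error terms so that they stay below $\epsilon/8$ is precisely what forces $s$ to be of order $k^{9}/\epsilon^{5}$ (with the crude $10^{20}$ soaking up the constants); a union bound over the $O(1)$ favourable events then completes the proof. The main obstacle is this \emph{diffuse} low-arity case — a sum of many pairwise-disjoint low-arity pieces is the prototype, far from every $k$-junta yet with no bounded ``core'' — where making the scale separation and the part-degree control quantitatively compatible with $s=10^{20}k^{9}/\epsilon^{5}$ is delicate; everything else is bookkeeping once the variation/Efron--Stein dictionary is in place.
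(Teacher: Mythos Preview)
The paper does not give its own proof of this lemma: it is quoted verbatim as a result of \cite{blais2009testing}, and the surrounding text only remarks that ``the analysis of this lemma is based on Efron--Stein decomposition''. So there is no in-paper argument to compare against; the relevant benchmark is the original Blais proof.

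Your outline is squarely in that spirit. Translating the hypothesis and conclusion into statements about the Efron--Stein weights $W_T$, handling the high-arity part ($|T|>k$) by a birthday bound on the number of occupied parts, and then isolating a bounded ``core'' $C$ of coordinates carrying the heavy low-arity weight so that a random partition into $s\gg |C|^2$ parts separates them --- this is essentially the skeleton of the argument in \cite{blais2009testing}. Two small caveats. First, the identity you write as $\mathbb{V}(\bar J)=2\sum_{T\not\subseteq J}W_T$ is literally true for the \emph{variance} version of $\mathbb{V}$ (or for $\{0,1\}$-valued $f$, where disagreement probability equals squared difference); for arbitrary finite $\mathcal{Y}$ you need an extra step (e.g.\ apply the decomposition to the indicator $\mathbf{1}[f(x)=y]$ for each $y\in\mathcal{Y}$, or to an embedding of $\mathcal{Y}$ into the simplex), which is routine but should be stated. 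Second, the part you yourself flag as ``the real work'' --- the diffuse low-arity regime where no small core captures the weight --- is left at the level of a plan (``part-degree'' control plus concentration). In the original proof this step is carried out by bounding, for each part, the total low-degree influence it picks up and then summing over the $k$ chosen parts; your description is compatible with that, but as written it is a promissory note rather than a proof, and it is precisely where the exponents $k^9/\epsilon^5$ get pinned down. Modulo filling in that piece, your approach matches the source the paper is citing.
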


This lemma shows that if a function is $\epsilon$-far from every $k$-junta under distribution $\Omega$, it is also $\epsilon/2$-far from $k$-part with high constant probability.  That is, with probability at least $5/6$, $\mathbb{V}_{\Omega, \Omega}(\bar J)\ge\epsilon/2$, where $J$ is any union of $k$ parts in $\mathcal{I}$. The analysis of this lemma is based on the Efron-Stein orthogonal decomposition. However, it is quite challenging to generalize the analysis to make it applicable in the distribution-free setting.

For the Boolean function $f: \{0,1\}^n\rightarrow \{0,1\}$, it was shown that the total number of blocks required is $s = 24k^2$. Note that $s$ is independent of $\epsilon$. Besides, the analysis of this structural lemma is much simpler. The core structural lemma is as follows:
\begin{lem}[\cite{blais2015partially}]\label{Junta_Lem2}
Let $s = 24k^2$.
Let $\mathcal I$ be a random partition of $[n]$ into $s$ parts, which is obtained by uniformly and independently assigning each coordinate to a part. For a function $f: \{0,1\}^n\rightarrow \{0,1\}$. If $f$ is $\epsilon$-far from being a $k$-junta, then with probability at least $5/6$, f is \textbf{$\epsilon/2$-far from $k$-part} with respect to $\mathcal I$.
\end{lem}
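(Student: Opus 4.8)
The plan is to replace the far‑from‑$k$‑junta hypothesis by a more tractable one and then analyze the random partition via a birthday‑type estimate together with a concentration bound; throughout, write $\mathrm{Inf}_i(f)=\mathbb P_{x\sim\mathcal U}[f(x)\ne f(x^{(i)})]$ for the influence of coordinate $i$. \emph{Step 1 (reduce to ``$\epsilon$-far from $k$-coordinate'').} First I would show $\mathbb V_{\mathcal U,\mathcal U}(\bar K)\ge\epsilon$ for every $k$-subset $K$ of $[n]$. If some $K$ violated this, let $g_K(x)$ be the more frequent value of $f(x_K w_{\bar K})$ over uniform $w$; then $g_K\in\mathcal J_k$, and restricting to a fiber $\{x:x_K=a\}$ the distance $dist_{\mathcal U}(f,g_K)$ contributes $\min(q_a,1-q_a)$ while $\mathbb V_{\mathcal U,\mathcal U}(\bar K)$ contributes $2q_a(1-q_a)$, where $q_a=\mathbb P_w[f(x_Kw_{\bar K})=1\mid x_K=a]$. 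Since $\min(q,1-q)\le 2q(1-q)$, this gives $dist_{\mathcal U}(f,g_K)\le\mathbb V_{\mathcal U,\mathcal U}(\bar K)<\epsilon$, contradicting $\epsilon$-farness from $\mathcal J_k$. In particular $\mathbb V_{\mathcal U,\mathcal U}([n])\ge\epsilon$, since $S\mapsto\mathbb V_{\mathcal U,\mathcal U}(S)$ is monotone.

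\emph{Step 2 (what a failure costs).} Fix the random partition $\mathcal I=\{P_1,\dots,P_s\}$ and suppose the conclusion fails, i.e.\ some union $J$ of $k$ parts has $\mathbb V_{\mathcal U,\mathcal U}(\bar J)<\epsilon/2$. Conditioning $\mathbb P[f=1]$ on the coordinates of $J$ gives the identity
\[
\mathbb V_{\mathcal U,\mathcal U}(\bar J)=\mathbb V_{\mathcal U,\mathcal U}([n])-2\,\mathrm{Var}_{x_J}\!\big(\mathbb P_{x_{\bar J}}[f(x)=1]\big),
\]
so a failure forces $\mathrm{Var}_{x_J}\big(\mathbb P_{x_{\bar J}}[f(x)=1]\big)>\tfrac14\big(\mathbb V_{\mathcal U,\mathcal U}([n])-\tfrac\epsilon2\big)\ge\epsilon/4$: the $k$ parts of $J$ must already pin down an $\Omega(\epsilon)$ share of the variance of $f$. (Sub-additivity of $\mathbb V_{\mathcal U,\mathcal U}$, proved by a two-step hybrid that re-randomizes one block and then another, gives a cruder route: choosing one maximum-influence coordinate per part of $J$ yields a $k$-set $C$ with $\epsilon\le\mathbb V_{\mathcal U,\mathcal U}(\bar C)\le\mathbb V_{\mathcal U,\mathcal U}(\bar J)+\tfrac12\sum_{i\in J\setminus C}\mathrm{Inf}_i(f)$, so $\sum_{i\in J\setminus C}\mathrm{Inf}_i(f)>\epsilon$ — tight when $f$'s influence is concentrated on few coordinates, but vacuous for ``spread-out'' $f$, which is why the variance identity serves as the master inequality.)

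\emph{Step 3 (bounding the probability).} It remains to prove $\mathbb P_{\mathcal I}\big[\exists\ k\text{-part }J:\ \mathrm{Var}_{x_J}(\mathbb P_{x_{\bar J}}[f(x)=1])>\epsilon/4\big]\le 1/6$ with $s=24k^2$. Split the coordinates into heavy ones $H=\{i:\mathrm{Inf}_i(f)\ge\epsilon\}$ and the rest. If $|H|\ge k+1$: a uniformly random partition places $H$ into at least $k+1$ distinct parts except with probability at most $\binom{k+1}{2}/s\le 1/24$, and on that event every $k$-part union $J$ omits some $i\in H$, so $\mathbb V_{\mathcal U,\mathcal U}(\bar J)\ge\mathbb V_{\mathcal U,\mathcal U}(\{i\})=\tfrac12\mathrm{Inf}_i(f)\ge\epsilon/2$; done. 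If $|H|\le k$: by Step~1 applied to $K=H$ (padded to size $k$), the coordinates of $H$ alone explain at most $(\mathbb V_{\mathcal U,\mathcal U}([n])-\epsilon)/2$ of the variance, and since the coordinates of any $k$-part union $J$ lie in $H$ together with the light coordinates (influence $<\epsilon$) of its $k$ parts, it suffices to bound by $\epsilon/4$, uniformly over $J$, the extra variance contributed by those light coordinates. For this I would use that the ``additive'' (pairwise-independent) part of that contribution concentrates around $O(\mathrm{Var}(f)/s)$ per part by a Bernstein estimate, and that the ``non-additive'' part — which is genuinely bilinear across pairs of parts (a parity block contributes nothing to any single part but a lot to a union of parts) — is negligible for random $\mathcal I$; choosing $s=24k^2$ makes the total failure probability at most $1/6$.

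The genuinely delicate point is this last estimate — controlling, uniformly over all $\binom sk$ candidate unions $J$, how much of $f$'s variance (including its non-additive, cross-part contribution) can be concentrated inside $k$ random parts, and meshing this with the separate treatment of the at most $k$ heavy coordinates; this is where the exact value $s=24k^2$ gets pinned down. The remaining ingredients — the reduction of Step~1, the variance identity of Step~2, and the heavy-coordinate birthday estimate — are routine by comparison.
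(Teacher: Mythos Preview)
The paper does not itself prove this lemma; it is quoted from \cite{blais2012testing}, and the paper only \emph{describes} the proof method --- sub-additivity of $\mathbb V_{\mathcal U,\mathcal U}$ combined with an intersecting-family argument --- precisely in order to explain why that method does not transfer to the distribution-free setting. So there is no in-paper proof to match, only that sketch.

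Measured against that, your proposal is a different route, and it has a genuine gap. Step~1 and the variance identity of Step~2 are correct, and the heavy-coordinate branch of Step~3 is fine. The hole is the light branch ($|H|\le k$). You want $\mathrm{Var}_{x_J}(\mathbb E[f\mid x_J])\le \mathrm{Var}_{x_H}(\mathbb E[f\mid x_H])+(\text{extra from }J\setminus H)$ with the extra at most $\epsilon/4$ uniformly over all $k$-part unions $J$. In Fourier terms the ``extra'' is $\sum_{\emptyset\ne S\subseteq J,\ S\cap(J\setminus H)\ne\emptyset}\hat f(S)^2$, and nothing you say controls this uniformly over the $\binom{s}{k}\approx (24k)^k$ choices of $J$: a Bernstein bound handles one random $J$, not the worst one, and the assertion that the cross-part ``non-additive'' contribution is ``negligible for random $\mathcal I$'' is precisely the statement to be proved (your own parity example shows it is not pointwise small). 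You correctly flag this as the ``genuinely delicate point,'' but it is not a tail detail --- it is the entire content of the lemma. The argument that does close is the one you sketch in the parenthetical of Step~2 and then set aside: sub-additivity converts a bad $J$ into a $k$-set $C\subseteq J$ with $\sum_{i\in J\setminus C}\mathrm{Inf}_i(f)>\epsilon$, and an intersecting-family analysis on the influence-ordered coordinates shows that with $s=24k^2$ parts this fails for every $J$ simultaneously except with probability $\le 1/6$.
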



That is, for any union of $k$ parts $J$ in $\mathcal{I}$, we have that $\mathbb{V}_{\mathcal{U}, \mathcal{U}}(\bar J)\ge\epsilon/2$. The analysis of this lemma uses the property of intersecting family. However, the analysis also relies on the property of sub-additivity of influence. Then the question becomes: does sub-additivity of $\mathbb V$ still hold in distribution-free setting? We know that the sub-additivity of $\mathbb V$ ( i.e., $\mathbb V_{\mathcal{U},\mathcal{U}}(J\cup K)\le \mathbb V_{\mathcal{U},\mathcal{U}}(J)+\mathbb V_{\mathcal{U},\mathcal{U}}(K)$) holds if $x$ and $w$ are both uniformly sampled from $\{0,1\}^n$. But if $x$ is sampled from $\{0,1\}^n$ according to an arbitrary and unknown distribution $\mathcal{D}$, and $w$ is sampled uniformly from $\{0,1\}^n$, the sub-additivity of $\mathbb V$ does not always hold true. For example, for the Boolean function $h: \{0,1\}^3\rightarrow \{0,1\}$ as illustrated in Figure $1$, let the first branch with probability $1$ represent distribution $\mathcal{D}$, then we have $\mathbb V_{\mathcal{D}, \mathcal{U}}(\{2\}) = \mathbb V_{\mathcal{D}, \mathcal{U}}(\{3\}) = 0$, while $\mathbb V_{\mathcal{D}, \mathcal{U}}(\{2,3\})>0$. Therefore, the sub-additivity of $\mathbb V$ under the distribution $\mathcal{D}$ does not hold in some cases. That is, $\mathbb V_{\mathcal{D}, \mathcal{U}}(J\cup K)\le \mathbb V_{\mathcal{D}, \mathcal{U}}(J)+\mathbb V_{\mathcal{D}, \mathcal{U}}(K)$.

\subsection{The Algorithm}

We propose an algorithm that has near-optimal query complexity, and ensuring that both the algorithm and analysis are simple and easy to understand intuitively. The first testing approach uses similar testing structure by \cite{liu2019distribution}, both testers under the distribution $\mathcal{D}$ are based on literal tester under the uniform distribution. By viewing the relationship between finding the part that the literal lies in and the biased coin identification problem, we achieve major advancement towards improving the query complexity. Using the "success amplification by majority" technique, we design a simple as well as fast subroutine. Accompanied with some sophisticated techniques including reuse some samples appropriately, our tester achieves near-optimal query complexity. Our tester iterates for a total of $k+1$ repetitions, if the tester fails to increase the number of relevant blocks by $1$, it directly terminates and accepts. The flowchart of our tester is illustrated in Figure \ref{fig: flowchart}.

\begin{figure}[h]
\centering
\includegraphics[width = 0.8\columnwidth]{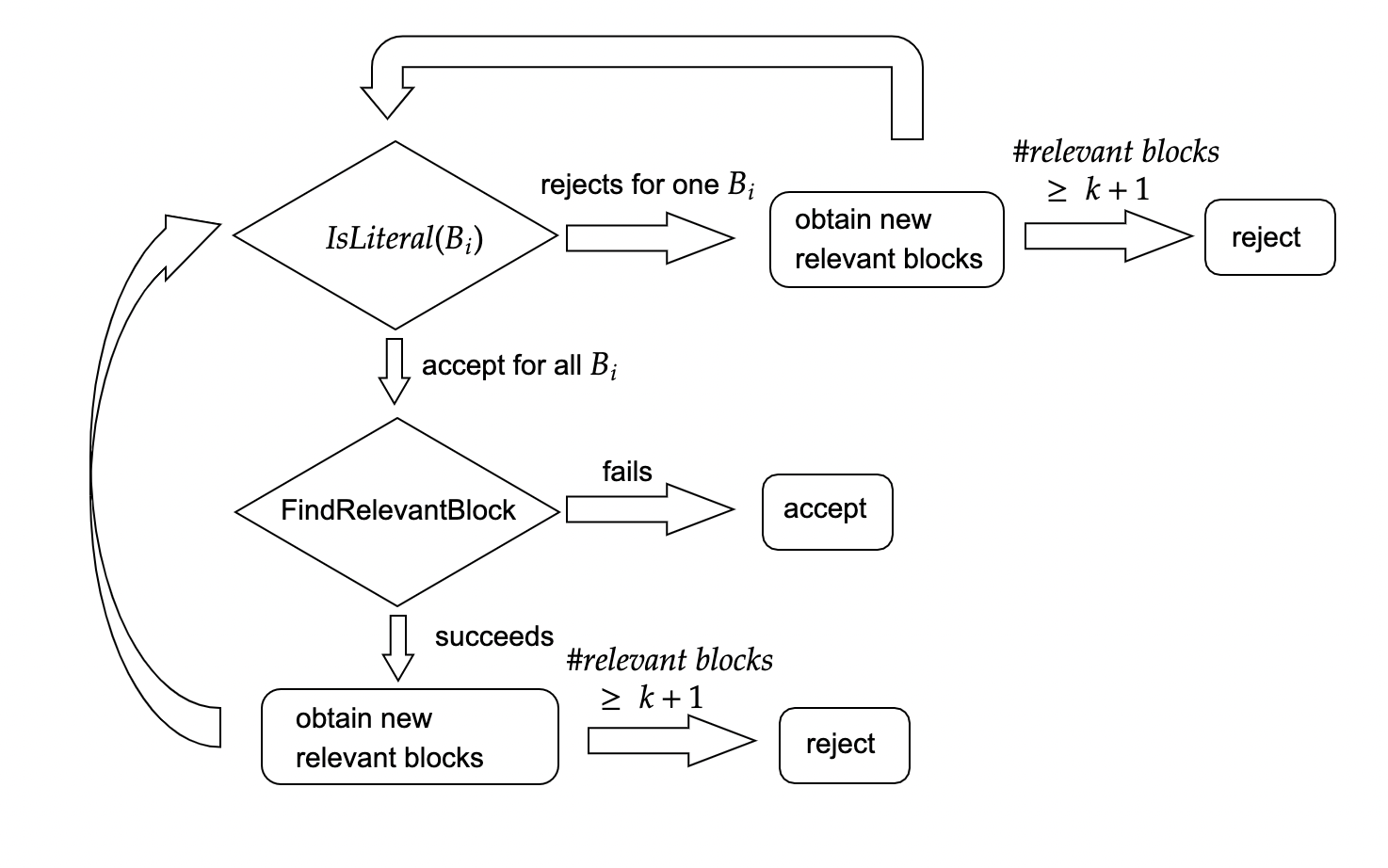}
\caption{The framework of our tester. The flowchart begins from the subroutine IsLiteral. If IsLiteral rejects for any $B_i$, then new relevant blocks are obtained by splitting a non-literal block into two new blocks. If IsLiteral accepts for all $B_i$, and FindRelevantBlock succeeds, then a new relevant block is obtained.}
\label{fig: flowchart}
\end{figure}

The main distinctions between our algorithm and the algorithm MainDJunta proposed in \cite{liu2019distribution} include:
\begin{itemize}
    \item  MainDJunta accepts if the algorithm could not find more than $k$ relevant blocks in a total of $\widetilde O(k^2)/\epsilon$ queries, while our algorithm accepts if the algorithm fails to find one more relevant block in $\widetilde O(1)/\epsilon$ queries.
    \item  MainDJunta uses the subroutine $\whereisliteral$ for identifying the part that contains the literal, while our algorithm uses a new subroutine \textbf{FindLiteral} which will be introduced in detail in section \ref{SecLiteralOrientate}.
    \item Our algorithm reuses the results obtained by the subroutine \textbf{FindLiteral}, which is an important observation towards achieving nearly-optimal query complexity. The specific approach is illustrated in Figure \ref{operation reuse}.
\end{itemize}

\begin{algorithm}[!htp]
\caption{Distribution-Free Junta Testing}
\label{Framework: Main Algorithm}
\begin{algorithmic}[1]
\STATE{Let $\mathcal A = \{A_{1}, \cdots, A_{v}\}$ be the set of to-be-orientated \textit{$\gamma$-special blocks}, $\mathcal B = \{B_1, \cdots, B_s\}$ be the set of to-be-checked relevant blocks, and $\mathcal K = \{K_1, \cdots, K_m\}$ be the set of \textit{$\gamma$-special blocks}, $\mathcal G: \mathcal{B}\rightarrow \{0,1\}^n$, where $B\in\mathcal B$ represents a block, and $s\in\{0,1\}^n$ represents a distinguishing string satisfying that $f(s)\neq f(s^{(B)}).$}
\STATE{\textbf{Initialization:} Set $\mathcal A = \mathcal B = \mathcal K = \mathcal N = \emptyset$, and $ w = 0$.}
\FOR{each of $k+1$ repetitions}\label{jump_to_line}
   \FOR{$j =1$ to $|\mathcal{B}|$}
      \STATE{//test whether block $B_j$ is $\gamma$-close to a literal under the uniform distribution}
   \IF{\textbf{IsLiteral($\mathcal{G}(B_j), B_j$)} accepts}
   \STATE{update $\mathcal {K}\leftarrow \mathcal {K}\cup \{B_j\}$.}
   \ELSE
          \STATE{obtain two relevant blocks $Z_1$ and $Z_2$ ($Z_1 \cup Z_2 = B_j$, and $Z_1 \cap Z_2 = \emptyset$), and update $\mathcal{G}$. set $\mathcal B \leftarrow \mathcal B\cup \{Z_1\}\cup \{Z_2\}\setminus \{B_j\}$, $w\leftarrow w+1$, and return to line \ref{jump_to_line}.}\label{gamma_far_increase}
          
   \ENDIF
   \ENDFOR
   
     \STATE{Set $\mathcal{A}\leftarrow \mathcal{B}$.}
         \IF{\textbf{FindRelevantBlock} ($f, w, \mathcal{A},\mathcal{K}, \mathcal{G}, \mathcal{N}$) returns false}
     \STATE{return accept.}
     \ELSE
     \STATE{obtain $\mathcal{B}$ and $\mathcal{G}$, $w\leftarrow w+1$, $\mathcal{A} \leftarrow \emptyset$.}
  
   \ENDIF
\ENDFOR
\STATE{return reject.}
\end{algorithmic}
\end{algorithm}

The algorithm maintains three collections of blocks. Let $\mathcal A = \{A_1, \cdots, A_v\}$ represent the set of to-be-orientated \textit{$\gamma$-special blocks} (\textit{$\gamma$-special blocks} that have not been identified using the subroutine \textbf{FindLiteral}), $\mathcal B = \{B_1, \cdots, B_s\}$ represent the set of to-be-checked relevant blocks (relevant blocks that have not been identified using the subroutine \textbf{IsLiteral}), and $\mathcal K = \{K_1, \cdots, K_m\}$ represent the set of all \textit{$\gamma$-special blocks}, as is shown in Algorithm \ref{Framework: Main Algorithm}. At each repetition, the algorithm uses the \textbf{IsLiteral} algorithm to test whether each of the to-be-checked relevant block is $\gamma$-close to a literal under the uniform distribution. Note that each relevant block $B$ is accompanied with a distinguishing pair $(x, y)$ for the block $B$. That is, $y = x^{(B)}$, and $f(x)\neq f(y)$. When $f$ is restricted to the relevant block $B$, the input outside the block $B$ is fixed as $x_{\bar B}$. If \textbf{IsLiteral} returns false for some to-be-checked relevant blocks, then the algorithm divides the relevant block into two parts, each part contains at least one relevant variable. In this case, one original relevant block is divided into two new relevant blocks, and the total number of relevant blocks increases by $1$. If \textbf{IsLiteral} returns true for all the to-be-checked relevant blocks, the algorithm constructs $y$ based on $x$ which is randomly sampled from distribution $\mathcal{D}$, ensuring that $f(x)\neq f(y)$ with probability at least $\epsilon/2$, then a new block could be found among the blocks that $x$ differs from $y$. In this way, the algorithm maintains that in the beginning of repetition $m+1$, there are $m$ relevant blocks. Before introducing the approach to construct distinguishing pairs in detail, we will review a property (illustrated in the following lemma) used to guide the design of the construction approach.
\begin{lem}[Lemma $3.2$ of \cite{liu2019distribution}]\label{lem_liu}
Let $I$ be a subset of $[n]$. If $f$ is $\epsilon$-far from every $k$-junta under the distribution $\mathcal{D}$, and the size of $I$ does not exceed $k$, then it is satisfied that 
\begin{align}
    \Pr_{x\sim\mathcal{D}, w\sim\mathcal{U}}[f(x)\neq f(x_I w_{\bar I})]\ge\epsilon/2.\label{eq_leml_1}
\end{align}
\end{lem}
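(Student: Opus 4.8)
The plan is to fix the uniformly random part $w$ first, recognize the resulting object as a $k$-junta, and then average over $w$. For a fixed $w\in\{0,1\}^n$, define $g_w:\{0,1\}^n\to\{0,1\}$ by $g_w(x)=f(x_I w_{\bar I})$. Since the coordinates of $x$ in $\bar I$ are overwritten by $w$, the value $g_w(x)$ depends only on the coordinates of $x$ lying in $I$; hence $g_w$ is a junta on the variable set $I$, and because $|I|\le k$ we get $g_w\in\mathcal{J}_k$. This step is where the hypothesis $|I|\le k$ is used, and it is the only structural observation needed.

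Next I would invoke that $f$ is $\epsilon$-far from $k$-junta under $\mathcal{D}$, i.e.\ $\min_{h\in\mathcal{J}_k} dist_{\mathcal{D}}(f,h)\ge\epsilon$. Applying this to the particular $k$-junta $g_w$ gives, for every fixed $w$,
\[
\mathbb{P}_{x\sim\mathcal{D}}\big[f(x)\neq f(x_I w_{\bar I})\big]\;=\;dist_{\mathcal{D}}(f,g_w)\;\ge\;\epsilon .
\]
Notice the right-hand side is already $\epsilon$, so the $\epsilon/2$ appearing in Eq.~(\ref{eq_leml_1}) is merely a convenient weakening; the extra slack costs nothing in the applications.

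Finally I would take the expectation of the displayed bound over $w\sim\mathcal{U}$. Since the draws $x\sim\mathcal{D}$ and $w\sim\mathcal{U}$ are independent, the law of total probability gives
\[
\mathbb{P}_{x\sim\mathcal{D},\,w\sim\mathcal{U}}\big[f(x)\neq f(x_I w_{\bar I})\big]
=\mathbb{E}_{w\sim\mathcal{U}}\Big[\mathbb{P}_{x\sim\mathcal{D}}\big[f(x)\neq f(x_I w_{\bar I})\big]\Big]\ge\epsilon\ge\epsilon/2,
\]
which is exactly Eq.~(\ref{eq_leml_1}).

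There is essentially no computational difficulty here; the one thing to be careful about is the order of quantifiers. The $\epsilon$-farness bound must be applied to $g_w$ for \emph{each} $w$ separately before averaging, which is legitimate precisely because $g_w\in\mathcal{J}_k$ for every $w$. Attempting to average over $w$ first and only then appeal to $\epsilon$-farness would require the map $x\mapsto\mathbb{E}_w[g_w(x)]$ (or some rounding of it) to be a $k$-junta, which need not hold; fixing $w$ before using the distance lower bound is the whole idea.
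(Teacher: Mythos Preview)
Your argument is correct. The paper itself does not supply a proof of this lemma; it is quoted directly from \cite{liu2019distribution}, so there is no in-paper proof to compare against. Your proof is in fact the standard one: fix $w$, observe that $g_w(x)=f(x_I w_{\bar I})$ is a junta on $I$ and hence lies in $\mathcal{J}_k$, apply the $\epsilon$-farness hypothesis to get $dist_{\mathcal D}(f,g_w)\ge\epsilon$, then average over $w$. As you note, this actually yields the stronger bound $\epsilon$ rather than $\epsilon/2$; the factor $1/2$ in the statement is slack (it matches the constant used downstream in the algorithm's analysis and in the original reference).
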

\begin{proof}
\cite{liu2019distribution} provide a simple proof for this lemma. \cite{bshouty2019almost} further provide an extremely simpler proof for this lemma. We will illustrate this proof to help readers understand it in a more intuitive way.

Let $I\subset [n]$ satisfying $|I|\le k$. For every fixed $w\in \{0,1\}^n$, the function $f(x_I w_{\bar I})$ is a $k$-junta and therefore $\Pr_{x\sim\mathcal{D}}[f(x)\neq f(x_I w_{\bar I})]\ge\epsilon/2$. Thus,
\begin{align}
    \Pr_{x\sim\mathcal{D}, w\sim\mathcal{U}}[f(x)\neq f(x_I w_{\bar I})]\ge\epsilon/2.\nonumber
\end{align}

\end{proof}

\noindent \textbf{Remark:} Eq. (\ref{eq_leml_1}) could be interpreted from the following point of view,
\begin{align}
    \Pr_{x\sim\mathcal{D}, R\subset \bar I}[f(x)\neq f(x^{(R)})]\ge\epsilon/2.\label{eq_leml_2}
\end{align}

The string $y = x_I w_{\bar I}$ could be interpreted as fixing the coordinates in $I$, and each of the remaining coordinate is uniformly sampled from $\{0,1\}$. 
For each coordinate $y_i$ in $\bar I$, $y_i$ is equal to $1$ with probability $1/2$, and is equal to $0$ with probability $1/2$. Since $x_i$ is either $1$ or $0$, this implies that each coordinate in $\bar I$ is equal to $x_i$ with probability $1/2$, and is equal to $\bar x_i$ with probability $1/2$. 

Therefore, if $f$ is $\epsilon$-far from every $k$-junta under the distribution $\mathcal{D}$, and the total number of the \textit{$\gamma$-special blocks} does not exceed $k$, then with probability at least $\epsilon/2$, the algorithm could find a new block which contains at least one more relevant variable.



\begin{algorithm}[!htp]
\caption{FindRelevantBlock ($f, w, \mathcal{A}, \mathcal{K}, \mathcal{G}, \mathcal{N}$)}
\label{Framework: SubAlgorithmFindRelevantBlock}
\begin{algorithmic}[1]
    \STATE{Let $\mathcal A = \{A_{1}, \cdots, A_{v}\}$ be the set of to-be-orientated \textit{$\gamma$-special blocks}, $\mathcal K = \{K_1, \cdots, K_m\}$ be the set of \textit{$\gamma$-special blocks}, $\mathcal G: \mathcal{B}\rightarrow \{0,1\}^n$, where $B\in\mathcal{B}$ represents a block, and $s\in\{0,1\}^n$ represents a distinguishing string satisfying that $f(s)\neq f(s^{(B)}).$}
    \STATE{\textbf{Initialization:} $\text{FindRelevant} = 0$, $t = 0$}\\
    \WHILE{$t<2(\log k+6)/\epsilon$}\label{ini_direc}
      \FOR{$j =1$ to $|\mathcal{A}|$}
       \STATE{$z\leftarrow \mathcal G(A_j)$}
       \STATE{Randomly partition the to-be-orientated $\specialblock$ $A_{j}$ into two parts $A_{j}^1$ and $A_{j}^2$.}
       \IF{\textbf{FindLiteral($z, A_{j}^1, A_{j}^2$)} returns fail}
       \STATE{$t\leftarrow t+1$, return to line \ref{ini_direc}.}
       \ELSE
       \STATE{it returns a block $L_{j}^t$ and the corresponding string}
       \STATE{Update $\mathcal{N}\leftarrow \mathcal{N}\cup \{N_{j}^t\}$, where $N_{j}^t = A_{j}\setminus L_{j}^t$.}
       \ENDIF
       \ENDFOR
       \IF{FindRelevant is $0$}
            \STATE{Sample $x$ from the distribution $\mathcal{D}$, construct $y$ by flipping coordinates $\cup_{i\in [|\mathcal{K}|]} \{N_i^t\} \cup \{C\}$ of $x$, where $C$ is a random subset of $([n]\setminus \cup_{i\in [|\mathcal{K}|]} \{K_i\})$.}\label{Initial_round_goto}\\
            \IF{$f(x)\neq f(y)$}
               \STATE{run \textbf{BlockBinarySearch} ($f, x, y, N_1^t, N_2^t, \dots, N_{|\mathcal{K}|}^t, C$, $\mathcal{G}$), obtain $\mathcal B$ and $\mathcal{G}$.}
               \STATE{set FindRelevant as $1$.}
             \ENDIF
       \ENDIF
       \STATE{$t\leftarrow t+1$.}
     \ENDWHILE
  \IF{FindRelevant is $1$}
    \STATE{return $\mathcal{B}$ and $\mathcal{G}$.}
  \ELSE
    \STATE{return false.}
  \ENDIF
\end{algorithmic}
\end{algorithm}

The specific construction approach is as follows: suppose a total of $m$ \textit{$\gamma$-special blocks} $K_1, K_2, \dots, K_m$ are identified, the algorithm randomly partitions each block into two parts, and uses the \textbf{FindLiteral} algorithm (We defer to Section \ref{SecLiteralOrientate} detailed introduction of this algorithm) to identify which part the literal lies in. Then the algorithm flips the coordinates of $x$ that belong to the part that does not contain the literal. For the remaining block $[n]\setminus \cup_{i\in [m]}K_i$, the algorithm randomly flips a subset of coordinates of $x$ in this block. In this way, the algorithm constructs $y$ based on $x$. It follows from Eq. (\ref{eq_leml_2}) that with probability at least $\epsilon/2$, $x$ and $y$ have distinct function values. Here, $I$ is composed of the literals in each $\specialblock$. When the number of \textit{$\gamma$-special blocks} does not exceed $k$, the size of $I$ also does not exceed $k$. Figure \ref{fig_constructing_y} shows an example for constructing $y$. If the algorithm finds a pair of strings $x$ and $y$ with distinct function values, then binary search over blocks is used to find a new relevant block.

\textbf{Remark: }Initially, there does not exist any relevant block. Algorithm~\ref{Framework: SubAlgorithmFindRelevantBlock} goes directly to line \ref{Initial_round_goto}. This procedure is equivalent to the following operation: randomly select a subset $S$ from $[n]$, and then construct $y$ by flipping the coordinates in $S$ of string $x$.

\begin{algorithm}[!htp]
\caption{BlockBinarySearch ($f, x, y, N_1, N_2, \dots, N_{|\mathcal{K}|}, C, \mathcal{G}$)}
\label{Framework: SubAlgorithm1}
\begin{algorithmic}[1]
         \STATE{\textbf{Input:} strings $x$ and $y\in \{0,1\}^n$, satisfying that $f(x)\neq f(y)$, and $x = y^{(\cup _{i=1}^{|\mathcal{K}|} N_i \cup C)}$.}
         \STATE{\textbf{Output:} the to-be-checked relevant blocks $\mathcal{B}$, and a string $z$ satisfying that $f(z)\neq f(z^{(B_{{|\mathcal{K}|}+1})})$ for $B_{{|\mathcal{K}|}+1}\in\{N_1, N_2, \cdots, N_{|\mathcal{K}|}, C\}$.}
         \STATE{Use binary search on $x$ and $y$ to find a relevant block $B_{{|\mathcal{K}|}+1}$ over blocks $N_1, N_2, \dots, N_{|\mathcal{K}|}, C$, accompanied with the distinguishing string $z$.}
         \STATE{Update $\mathcal{G}\leftarrow \mathcal{G} \cup \{(B_{{|\mathcal{K}|}+1},z)\}$.}
         \IF{$B_{{|\mathcal{K}|}+1} = C$}
         \STATE{update $\mathcal B \leftarrow \{B_{{|\mathcal{K}|}+1}\}$.}
         \ELSIF{{$B_{{|\mathcal{K}|}+1} = N_h$}}
         \STATE{update $\mathcal B \leftarrow \{L_h\}\cup \{N_h\}$.}
         \ENDIF
         \STATE{Return $\mathcal B$ and $\mathcal{G}$.}
\end{algorithmic}
\end{algorithm}

Now we illustrate the specific approach of finding a new relevant block using binary search over blocks. Suppose that $x$ differs from $y$ in coordinates in block $D$. Let $D = \cup_{i\in [p]} D_i$, $D_l = \cup_{i=1}^{d} D_i$ and $D_r = \cup_{i=d+1}^{p} D_i$, where $d = \lfloor p/2\rfloor$. Since $f(x)\neq f(y)$, we have that either $f(x^{(D_l)})\neq f(x)$ or $f(x^{(D_l)})\neq f(y)$. Note that $f(y) = f(x^{(D)})= f(x^{(D_l\cup D_r)})$. This implies that $x^{(D_l)}$ differs from $y$ in coordinates in $D_r$. If $f(x^{(D_l)})\neq f(x)$, then the search range shrinks to $D_l$, otherwise it shrinks to $D_r$. In this way, binary search recursively shrink the search range to a half of the block set, and could finally find one block that contains at least one relevant variable. In contrast with using binary search over coordinates, the query complexity is reduced from $O(\log(n))$ to $O(\log(k))$.

\begin{figure}[h]
\centering
\includegraphics[width = 0.4\columnwidth]{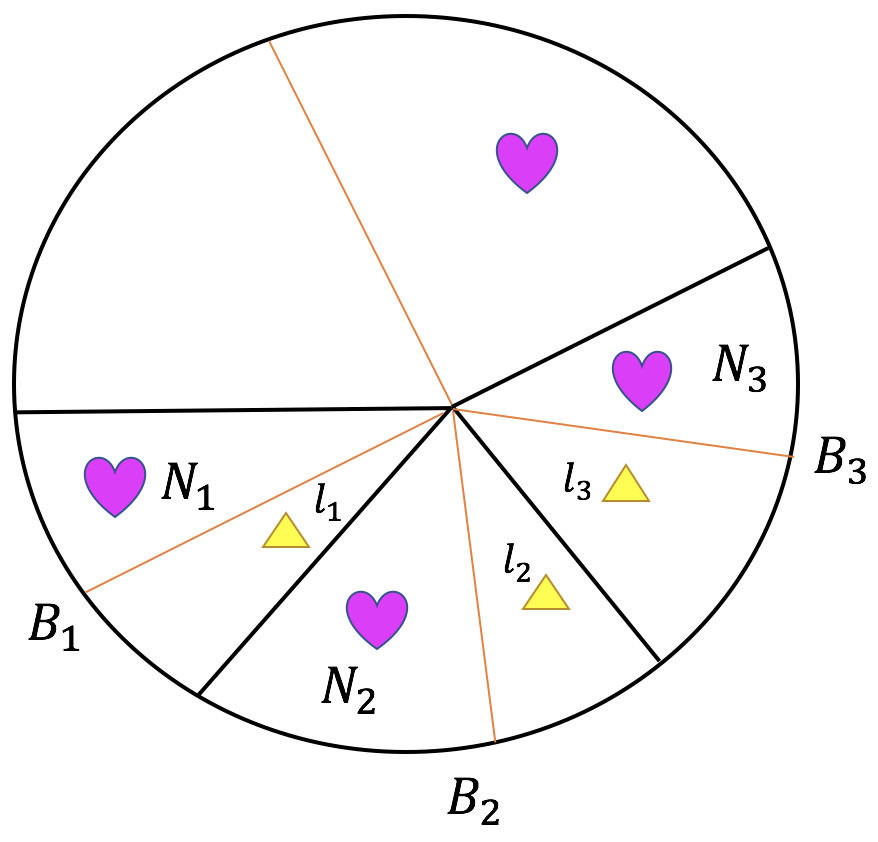}
\caption{Let the whole circle represent string $x$ over coordinates $[n]$. The algorithm finds three \textit{$\gamma$-special blocks}, separately $B_1, B_2$ and $B_3$. Each block $B_i$ is $\gamma$-close to a literal $l_i$ under the uniform distribution, which is represented using a triangle. Then the algorithm randomly divides each block into two parts, and partitions the remaining block (complement of the union of blocks $B_1, B_2$ and $B_3$) into two parts. The algorithm constructs $y$ by flipping the coordinates in the blocks that are marked by the heart (the part $N_1 (N_2, N_3)$ that does not contain the literal $l_1 (l_2, l_3)$, and a random part of the remaining block).}
\label{fig_constructing_y}
\end{figure}

It is worth noting that the subroutine \textbf{FindLiteral} could be reused, thereby efficiently reducing query complexity. An illustrative example is shown in Figure \ref{operation reuse}.

\begin{figure*}[h]
   \centering 
   \subfigure[Three relevant blocks are identified]{
   \includegraphics[width=.43\columnwidth]{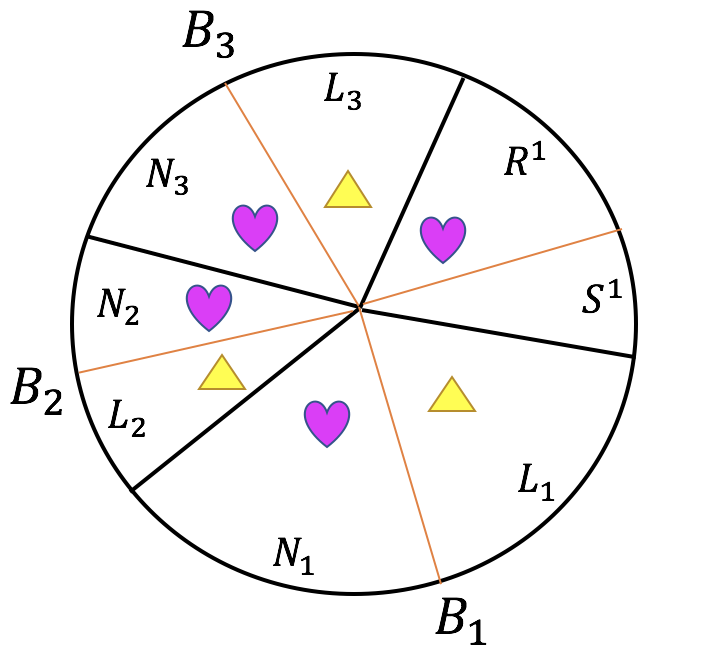}
   \label{fig:The number of samples w.r.t distinct number of samples}
   }
   \subfigure[Four relevant blocks are identified]{
   \includegraphics[width=.4\columnwidth]{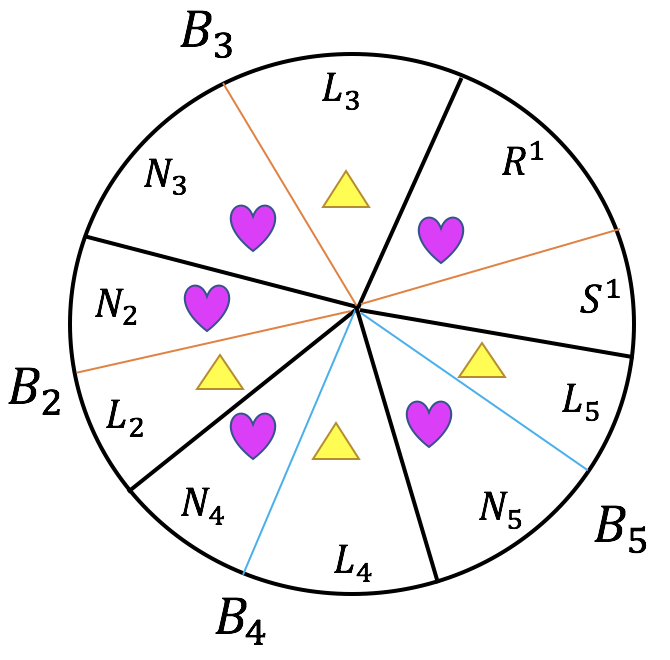}
   \label{fig:The number of samples w.r.t distinct number of samples}
   }
   \caption{(a) The algorithm finds three relevant blocks $B_i = N_i \cup L_i$, $B_i$ is $\gamma$-close to literal $l_i$ under the uniform distribution, where $i = 1, 2, 3$. $L_1$, $L_2$ and $L_3$ are the blocks that contain literals $l_1$, $l_2$ and $l_3$ separately. The coordinates in $N_1, N_2, N_3$ and $R^1$ are flipped. $N_1$ is the relevant block found using binary search over $N_1, N_2, N_3$ and $R^1$. 
(b) $B_1$ is divided into two relevant blocks $B_4$ and $B_5$, referred to as the to-be-checked relevant block. Now the algorithm finds a total of four relevant blocks $B_i = N_i \cup L_i, i = 2, 3, 4, 5$. The algorithm then uses \textbf{IsLiteral} to test whether each of the to-be-checked relevant block $B_4$ and $B_5$ is $\gamma$-close to a literal under the uniform distribution. If \textbf{IsLiteral} returns true for these two blocks, then $B_4$ and $B_5$ are referred to as the to-be-orientated $\gamma$-special blocks ($\mathcal A = \{B_4, B_5\}$). The algorithm then randomly partitions $B_4$ and $B_5$ to two parts, and then uses \textbf{FindLiteral} to find the part that contains the literal. Note that the subroutine \textbf{FindLiteral} over blocks $B_2$ and $B_3$ could be reused, which is an important observation for designing the algorithm with improved query complexity.} 
   \label{operation reuse}
\end{figure*}

\section{Subroutines for finding relevant blocks}
In this section, we will introduce the specific approaches for finding relevant blocks. For a Boolean function $f$ that is $\epsilon$-far from every $k$-junta under the distribution $\mathcal{D}$, if the to-be-checked relevant block is $\gamma$-far from every literal under the uniform distribution, then the total number of relevant blocks increases by $1$; If the to-be-checked relevant block is $\gamma$-close to a literal under the uniform distribution, then the algorithm fixes a random subset of the $\specialblock$ that contains the literal. According to Lemma \ref{lem_liu}, a pair of strings $(x, y)$ with distinct function values could be identified with probability at least $\epsilon/2$. The total number of relevant blocks increases by $1$. Therefore, it requires to design an algorithm which could identify whether this block is $\gamma$-close to a literal under the uniform distribution, and an algorithm that could identify the part that contains the literal.



\subsection{Subroutine for testing the literal}

The first algorithm \textbf{IsLiteral} is used to test whether $g$ is $\gamma$-close to a literal under the uniform distribution. Similar to \cite{liu2019distribution}, this algorithm uses \textbf{UniformJunta} proposed by \cite{blais2009testing} for junta testing under the uniform setting (the algorithm is specified in appendix). The key difference is that here $\gamma$ is set as a constant instead of a function of $k$. 

\begin{itemize}
   \item \textbf{Input:}\; $g$: $\{0,1\}^B \rightarrow \{0,1\}.$
    \item \textbf{Output:}\; Accept if $g$ is a literal; Reject with probability at least $1-1/(28k)$ if $g$ is $\gamma$-far from literal, where $\gamma = 1/8$.
\end{itemize}

\begin{lem}[\cite{blais2009testing}]\label{lem_Uniform_Testing}
If  $f$ is $k$-junta under the uniform distribution, then UniformJunta$(f, k, \gamma)$ accepts. If $f$ is $\gamma$-far from every $k$-junta under the uniform distribution, then UniformJunta$(f, k, \gamma)$ rejects with probability at least $2/3$. The query complexity of this algorithm is bounded by $O(k/\gamma + k\log(k))$.
\end{lem}

\begin{lem}\label{Literal_Lem}
If $g$ is $\gamma$-far from any literal under the uniform distribution, then with probability at least $1-1/(28k)$, the algorithm \textbf{IsLiteral} rejects, where $\gamma = 1/8$.
\end{lem}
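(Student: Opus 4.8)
The plan is to reduce Lemma~\ref{Literal_Lem} to Lemma~\ref{lem_Uniform_Testing} by observing that a literal is a very special kind of $1$-junta, so testing closeness to a literal can be done by first testing closeness to a $1$-junta and then, if it passes, explicitly checking which of the (at most two) $1$-juntas on the surviving coordinate it is close to. Concretely, I would first run \textbf{UniformJunta}$(g, 1, \gamma)$ with $\gamma = 1/8$. If $g$ is $\gamma$-far from every literal, there are two cases: either $g$ is also $\gamma$-far from being a $1$-junta, in which case Lemma~\ref{lem_Uniform_Testing} already gives rejection with probability $\ge 2/3$; or $g$ is $\gamma$-close to some $1$-junta $h$ but that $1$-junta is a constant function (since the only non-constant $1$-juntas are the two literals $x_i$ and $\bar x_i$ on the relevant coordinate, and by assumption $g$ is $\gamma$-far from those). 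So the real content is: after \textbf{UniformJunta} accepts, identify the relevant coordinate (or detect a constant), estimate $\Pr_{u\sim\mathcal U}[g(u)\neq u_i]$ and $\Pr_{u\sim\mathcal U}[g(u)\neq \bar u_i]$ by sampling, and reject if both empirical estimates exceed, say, $\gamma/2$. A literal passes all checks deterministically; anything $\gamma$-far from all literals fails one of the checks with high probability.

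The key steps, in order: (1) state that a literal on coordinate $i$ is one of the two non-constant $1$-juntas depending only on $x_i$, so ``$\gamma$-far from every literal'' splits into the ``$\gamma$-far from $1$-junta'' branch and the ``$\gamma$-close to a constant'' branch handled by a direct bias estimate; (2) invoke Lemma~\ref{lem_Uniform_Testing} with $k=1$ to handle the first branch, noting that its error probability $1/3$ must be driven down to $1/(28k)$ by standard amplification (repeat $O(\log k)$ times and take majority / reject if any run rejects for a one-sided subroutine), which inflates the sample complexity only by an $O(\log k)$ factor to $O(\log k /\gamma + \log k \log 1)=O(\log k)$; (3) for the surviving case, once \textbf{UniformJunta} has effectively pinned down a candidate relevant coordinate $i$ (or the claim that $g$ is near-constant), draw $O(\log k /\gamma^2)$ uniform samples of $g$ and estimate the two quantities $\Pr[g(u)\neq u_i]$, $\Pr[g(u)\neq \bar u_i]$; by a Chernoff/Hoeffding bound, with probability $\ge 1 - 1/(28k)$ the empirical estimates are within $\gamma/4$ of the truth, so if $g$ is $\gamma$-far from both literals on $i$ we reject, while if $g$ equals a literal we never reject; (4) combine the failure probabilities by a union bound over the $O(1)$ sub-tests to get total error $\le 1/(28k)$, rescaling constants inside the Chernoff bounds as needed.

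The one-sidedness is straightforward to maintain: if $g$ is exactly a literal, then it is a $1$-junta so \textbf{UniformJunta} accepts with probability $1$ (by Lemma~\ref{lem_Uniform_Testing}, which has one-sided error on the ``close'' side), and the bias estimates for the correct literal are exactly $0$, so no empirical estimate can exceed the threshold; hence \textbf{IsLiteral} accepts with probability $1$, as required by the Output specification. The accuracy parameter $\gamma = 1/8$ is a fixed constant, which is what makes the sample complexity only $O(\log k)$ — this is the point emphasized in the text that ``$\gamma$ is set as a constant instead of a function of $k$.''

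I expect the main obstacle to be step (3): cleanly extracting the candidate relevant coordinate from the output of \textbf{UniformJunta} (which as a black box only promises accept/reject, not the identity of the relevant variable) so that the subsequent bias estimate is well defined. The fix is either to open up \textbf{UniformJunta} and use the block/coordinate it isolates, or — cleaner — to note that for $k=1$ one can afford to identify the relevant coordinate directly by a binary-search argument over coordinates once a violating pair is found, at cost $O(\log n)$, or to sidestep it entirely by testing closeness to a literal via the ``every pair of antipodal-ish points'' characterization; but for the purposes of this lemma it suffices to cite Lemma~\ref{lem_Uniform_Testing} as giving, on acceptance, a single coordinate $i$ together with the guarantee that $g$ is $\gamma$-close to \emph{some} function of $x_i$ alone, and then the two bias checks decide among constant, $x_i$, and $\bar x_i$. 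A secondary routine obstacle is bookkeeping the amplification constants so that every sub-test's error is at most $1/(56k)$ and the union bound closes at $1/(28k)$.
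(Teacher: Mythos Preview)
Your two-case split (far from every $1$-junta vs.\ close to a constant) is exactly the paper's, and your handling of Case~1 via amplified \textbf{UniformJunta} matches. The divergence, and the real gap, is in Case~2. Your step~(3) hinges on extracting a candidate coordinate $i$ so that you can estimate $\Pr[g(u)\neq u_i]$ and $\Pr[g(u)\neq \bar u_i]$, but Lemma~\ref{lem_Uniform_Testing} as stated gives only accept/reject, not a coordinate; your proposed fixes either introduce an $O(\log n)$ cost (binary search over coordinates) or assume an output that the cited lemma does not promise. You flag this yourself, but none of the workarounds you list actually closes the gap within the stated query budget and black-box access.

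The paper sidesteps the issue entirely by exploiting a piece of information you did not use: \textbf{IsLiteral} is called with a witness string $x$ satisfying $g(x)\neq g(x^{(B)})$, because $B$ was found as a relevant block. In Case~2 (say $g$ is $\gamma$-close to the constant $h$), one of $g(x),g(x^{(B)})$ differs from $h$; WLOG $g(x)\neq h$. Now draw a uniformly random subset $L\subseteq B$ and set $N=B\setminus L$. Then $x^{(L)}$ and $x^{(N)}$ are each uniform over $\{0,1\}^B$, so by a union bound $g(x^{(L)})=g(x^{(N)})=h\neq g(x)$ with probability at least $1-2\gamma=3/4$, and the algorithm rejects. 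No coordinate identification is needed, and as a bonus this rejection certificate simultaneously exhibits $L$ and $N$ as two disjoint relevant sub-blocks of $B$ with a common witness string, which is precisely what line~\ref{gamma_far_increase} of the main algorithm needs. (If you want to salvage your route, a cleaner patch than any you listed is simply to estimate the bias $\Pr[g(u)=1]$ with $O(\log k)$ samples and reject if it is near $0$ or $1$; a literal has bias exactly $1/2$. That proves the lemma but loses the ``two relevant sub-blocks'' by-product.)
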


\subsection{Subroutine for locating the literal}\label{SecLiteralOrientate}

In this section, we will introduce our algorithm for locating the literal. Before introducing our algorithm in detail, we will firstly compare the key results of our algorithm and the counterpart of the previous work.


\subsubsection{Comparison with the previous work}
\cite{liu2019distribution} used a subroutine called $\whereisliteral$ to find the part the literal lies in. The property of this subroutine is illustrated in the following lemma. 
\begin{lem}[\cite{liu2019distribution}]
Let $\gamma = 1/(8k)$. Assume that $g: \{0,1\}^B\rightarrow \{0,1\}$ is $\gamma$-close (with respect to the uniform distribution) to a literal $x_i$ or $\bar x_i$ for some $i\in B$. If $i\in P$, then $\whereisliteral$ ($g,P,Q$) returns a distinguishing pair of $g$ for $P$ with probability at least $1-4\gamma$; If $i\in Q$, then it returns a distinguishing pair of $g$ for $Q$ with probability at least $1-4\gamma$.
\end{lem}
Each call of the subroutine $\whereisliteral$ requires four queries. The query complexity of the algorithm \textbf{IsLiteral} is $\widetilde O(1/\gamma) = \widetilde O(k)$ from Lemma \ref{lem_Uniform_Testing}. We propose a new subroutine \textbf{FindLiteral}, the result is shown in the following lemma. Compared with the above lemma, the advantage of Lemma \ref{constant_gamma} is that it improves the parameter $\gamma$ from $1/(8k)$ to a constant.
\begin{lem}\label{constant_gamma}
If a Boolean function $g: \{0,1\}^B\rightarrow \{0,1\}$ is $\gamma$-close to a literal under the uniform distribution, then we have that with probability at least $1-1/(12k)$, the algorithm \textbf{FindLiteral} could return the part that contains the literal, where $\gamma = 1/8$.
\end{lem}


On the one hand, the query complexity of the algorithm \textbf{IsLiteral} is $\widetilde O(1/\gamma) = \widetilde O(1)$ from Lemma \ref{lem_Uniform_Testing}. Besides, the algorithm \textbf{FindLiteral} requires $O((1-\gamma)^{-2}\log k) = \widetilde O(1)$ queries. Setting $\gamma$ as a constant, the number of queries required is $O(\log k)$. On the other hand, the parameter $\gamma$ controls the probability of reject if $f$ is $\epsilon$-far from every $k$-junta. From Lemma \ref{constant_gamma}, the algorithm could find the correct part that contains the literal with probability at least $1-1/\Theta(k)$. By a union bound over all $k$ blocks, the probability that the algorithm could find correctly the blocks that contain the literals is lower bounded by a constant.




\subsubsection{FindLiteral Algorithm}



\begin{algorithm}[!htp]
\caption{FindLiteral ($x,L,N$)}
\label{Framework:Algorithm for LO}
\begin{algorithmic}[1]
\STATE{Define $g(u) = f(u_Bx_{\bar B})$, where $B = L\cup N$.}
\STATE{Set $i\leftarrow 1$.}
\WHILE{$i\le 128\log(64k)$}
\STATE{Query a pair of strings $(w^i,v^i)$ with coordinates in block $L$ flipped. Specifically, $w^i$ is sampled from $\{0,1\}^B$ uniformly at random, and $v^i \leftarrow w^{i (L)}$. $\mathcal S_L^i\leftarrow 1$ if $g(w^i)\neq g(v^i)$, and $\mathcal S_L^i\leftarrow 0$ otherwise.}
\STATE{Query a pair of strings $(w^i,v^i)$ with coordinates in block $N$ flipped, Specifically, $w^i$ is sampled from $\{0,1\}^B$ uniformly at random, and $v^i \leftarrow w^{i (N)}$. $\mathcal S_N^i\leftarrow 1$ if $g(w^i)\neq g(v^i)$, and $\mathcal S_N^i\leftarrow 0$ otherwise.}
\STATE{$i\leftarrow i+1$.}
\ENDWHILE
\STATE{$\hat T_L^i \leftarrow \sum_{j=1}^{i} \mathcal S_L^j, \hat T_N^i \leftarrow \sum_{j=1}^{i} \mathcal S_N^j$}
\IF{$\hat T_L^i> \hat T_{N}^i$}
\STATE{let $w$ be any string satisfying that $g(w)\neq g(w^{(L)})$, $z\leftarrow w_Bx_{\bar B}$.}
\STATE{return $L$ and $z$.}
\ELSIF{$\hat T_N^i> \hat T_{L}^i$}
\STATE{let $w$ be any string satisfying that $g(w)\neq g(w^{(N)})$, $z\leftarrow w_Bx_{\bar B}$.}
\STATE{return $N$ and $z$.}
\ELSE 
\STATE{return fail.}
\ENDIF
\end{algorithmic}
\end{algorithm}

Now we are ready to introduce our newly proposed algorithm. If $g: \{0,1\}^B \rightarrow \{0,1\}$ is $\gamma$-close to a literal $l$ under the uniform distribution, then block $B$ is randomly divided into two parts $L$ and $N$. We want to design an algorithm to identify which part the literal lies in using $O(\log k)$ number of queries.

\begin{itemize}
   \item \textbf{Input:}\; $g: \{0,1\}^B \rightarrow \{0,1\}$ that is $\gamma$-close to a literal $l$ under the uniform distribution, block $L$ and $N$ ($L\cap N = \emptyset, L\cup N = B$),  $\gamma = 1/8$.
    \item \textbf{Output:}\; With probability at least $1-1/(12k)$, return the block that contains literal $l$.
\end{itemize}

We show that this problem could be interpreted as a variant of the problem of $\bestarmident$. Let $L$ represent the block that contains literal $l$. We use $i$ to index the sample sequence. Let $\mathcal S_L^i=1$ represent $g(w^i)\neq g(w^{i (L)})$, and $\mathcal S_L^i=0$ otherwise. We will show that $\mathbb E[\mathcal S_L^i] \ge 1-2\gamma$. Similarly, let $\mathcal S_N^i=1$ represent $g(x^i)\neq g(x^{i (N)})$, and $\mathcal S_N^i=0$ otherwise. We will show that $\mathbb E[\mathcal S_N^i] \le 2\gamma$. The expected reward of block $L$ is at least $1-2\gamma$, and that of block $N$ is at most $2\gamma$. Set $\gamma = 1/8$, we know that block $L$ has larger reward. Then, the problem of locating the literal is reduced to the problem of identifying the best block with high probability within $O(\log k)$ number of queries.

Since $g$ is $\gamma$-close to a literal $l$ under the uniform distribution, we have that
\begin{align}
\Pr_{w^i \sim\mathcal{U}}(g(w^i)= h(w^i))\ge 1-\gamma,
\end{align}
where $h(w)$ is a literal function which depends on literal $l$.

Let $a^i = w^{i (L)}$. Since $h(\cdot)$ is a literal function which depends on literal $l$, and literal $l$ is contained in block $L$, we have that $h(w^i) \neq h(a^i)$. Therefore, 

\begin{align}
    \Pr_{w^i \sim\mathcal{U}}[ g(w^i) = g(a^i) ] \le \Pr_{w^i \sim\mathcal{U}}[ g(w^i)  \neq h(w^i) \vee g(a^i) \neq h(a^i) ] \le 2\gamma.
\end{align}

Therefore, we have
\begin{align}
    \Pr_{w^i \sim\mathcal{U}}[ g(w^i) \neq g(a^i) ] \ge 1 - 2\gamma.
\end{align}

This implies that block $L$ that contains literal $r$ could be regarded as the block with expected value larger than $1-2\gamma$. Therefore, $\mathbb E[\mathcal S_L^i] \ge 1-2\gamma$. 

Let $\mathcal S_N^i=1$ represent $g(x^i)\neq g(x^{i (N)})$, and $\mathcal S_N^i=0$ otherwise. Now we show that $\mathbb E[\mathcal S_N^i] \le 2\gamma$. Let $b = x^{i (N)}$. Since $h(\cdot)$ is a literal function which depends on literal $l$, and literal $l$ is not contained in block $N$, we have that $h(x^i) = h(b^i)$. Therefore, 

\begin{align}
    \Pr_{x^i \sim\mathcal{U}}[ g(x^i) \neq g(b^i) ] \le \Pr_{x^i \sim\mathcal{U}}[ g(x^i)  = h(x^i) \vee g(b^i) = h(b^i) ] \le 2\gamma.
\end{align}

Therefore, we have
\begin{align}
    \Pr_{x^i \sim\mathcal{U}}[ g(x^i) = g(b^i) ] \ge 1 - 2\gamma.
\end{align}

This implies that block $N$ that does not contain literal $l$ could be regarded as the block with expected value less than $2\gamma$. Therefore, $\mathbb E[\mathcal S_N^i] \le 2\gamma$.

The core idea of the FindLiteral algorithm is that a confidence interval is constructed for each block. If the lower bound of the confidence interval of one block is larger than the upper bound of another block, then the reward of this block is larger with high probability. When the lower bound of the reward of one block is larger than the upper bound of the reward of another block, the algorithm returns this block. The algorithm returns an arbitrary block if it fails to identify such a block, which occurs with probability at most $\delta$.

\begin{lem}[Restatement of Lemma \ref{constant_gamma}]\label{literal_oreintation}
If a Boolean function $g: \{0,1\}^B\rightarrow \{0,1\}$ is $\gamma$-close to a literal under the uniform distribution, then we have that with probability at least $1-1/(12k)$, the algorithm \textbf{FindLiteral} could return the part that contains the literal, where $\gamma = 1/8$.
\end{lem}

\begin{proof}

\noindent Suppose that $g$ is $\gamma$-close to a literal $l$ under the uniform distribution. The algorithm then randomly divide block $B$ into two parts. Assume that $L$ is the part that contains literal $l$. Set $i = 128\log(64k)$, $\delta = 1/(24k)$. Let $\hat\mu_L^i = \sum_{j=1}^{i} \mathcal S_L^j/i = \hat T_L^i/i, \hat\mu_N^i = \sum_{j=1}^{i} \mathcal S_N^j/i = \hat T_N^i/i$, and $r_i = \sqrt{\log(1/\delta)/(2i)}$.

\noindent Recall that $g(w^i)\neq g(w^{i (L)})$ with probability at least $1-2\gamma$, and $g(w^i)\neq g(w^{i (N)})$ with probability at most $2\gamma$. Therefore, $\mu_L^i = \mathbb{E}[\hat\mu_L^i]\ge 1-2\gamma$, and $\mu_N^i = \mathbb{E}[\hat\mu_N^i]\le 2\gamma$, where $\gamma = 1/8$.

\noindent From Hoeffding's inequality, we have that
\begin{align}
\Pr(|\mu - \hat\mu|\ge r)\le\delta.\label{Eq_Concentrate}
\end{align}

\noindent Therefore, with probability at least $1-2\delta$, we have 
\begin{align}
    (\hat T_L^i - \hat T_N^i)/i \ge (\mu_L - \mu_N - 2r_i)>0.
\end{align}

\noindent Therefore, we have that the algorithm could return the block $L$ that contains the literal within $128\log(64k)$ iterations with probability at least $1-1/(12k)$.

\end{proof}

\section{Analysis of the Algorithm}
In this section, we show the correctness and the query complexity of our algorithm.

\subsection{Correctness Analysis of the Algorithm}
We firstly show the correctness of our algorithm. We will present a simple argument that does not rely on the potential function.

\begin{thm}\label{correct1}
If $f$ is $\epsilon$-far from every $k$-junta under the distribution $\mathcal {D}$, then the algorithm rejects with probability at least $2/3$.
\end{thm}
\begin{proof}
We want to prove that, if $f$ is $\epsilon$-far from every $k$-junta under the distribution $\mathcal {D}$, then with probability at least $2/3$, the algorithm could identify at least $k+1$ relevant blocks. 
If the total number of relevant blocks is larger than $k$, then the algorithm returns reject. If the total number of relevant blocks does not exceed $k$, we analyze according to the following two cases. 




\textbf{Case $1$}: If one of the to-be-checked relevant blocks is $\gamma$-far from every literal under the uniform distribution.
According to Lemma \ref{Literal_Lem}, the algorithm \textbf{IsLiteral} rejects with probability at least $1-1/(28k)$. From Line \ref{gamma_far_increase} of Algorithm \ref{Framework: Main Algorithm}, the number of relevant blocks increases by $1$.

\textbf{Case $2$}: If each of the to-be-checked relevant block is $\gamma$-close to a literal under the uniform distribution, then we analyze according to the following two subcases.

\textbf{Subcase $1$}: The algorithm \textbf{UniformJunta} rejects for one of the to-be-checked relevant blocks, therefore \textbf{IsLiteral} rejects. This block is then divided into two relevant blocks, and the number of relevant blocks increases by $1$.

\textbf{Subcase $2$}: The algorithm \textbf{UniformJunta} accepts for all the to-be-checked relevant blocks, therefore \textbf{IsLiteral} accepts. From Lemma \ref{constant_gamma} we have that, with probability at least $1-1/(12k)$, the algorithm \textbf{FindLiteral} 
could identify correctly which block the literal lies in. 
Let $A_t$ denote the event that the blocks containing literals could be identified correctly in iteration $t$. By a union bound, 
\begin{align}
\Pr(A_t)\ge 1 - k\cdot 1/(12k).
\end{align}
Conditioned on event $A_t$ holds, according to Eq. (4) of Lemma \ref{lem_liu}, with probability at least $\epsilon/2$, the algorithm could find one more relevant block. Let $B_t$ denote the event that the algorithm succeeds in finding a new relevant block in iteration $t$, we have that


\begin{align}
\Pr(B_t|A_t)\ge \epsilon/2.
\end{align}
It follows that
\begin{align}
\Pr(A_t \cap B_t) = \Pr(A_t)\cdot \Pr(B_t|A_t)\ge 11\epsilon/24.
\end{align}
Therefore, with probability at least $1 - 1/(3k)$, the total number of relevant blocks increases by $1$ in $2(\log k+6)/\epsilon$ iterations.


Let $\mathcal E_i$ be the event that the total number of relevant blocks increases from $i-1$ to $i$. Combining the above two cases, we have that 

\begin{align}
\Pr(\mathcal E_i)\ge  \min\{1-1/(28k), 1 - 1/(3k)\} \ge 1 - 1/(3k).
\end{align}

\noindent Therefore,

\begin{align}
    \Pr(\cap_{i\in[k]} \mathcal E_i) = 1 -  \Pr(\cup_{i\in[k]} \bar{\mathcal E_i})\ge 1 - \sum_{i\in [k]}\Pr(\bar{\mathcal E_i})\ge 2/3.
\end{align}

\noindent Therefore, with probability at least $2/3$, the algorithm could find $k+1$ relevant blocks in $\widetilde O(k/\epsilon)$ number of queries. In conclusion, the algorithm rejects with probability at least $2/3$.

\end{proof}

\begin{lem}\label{correct2}
If $f$ is a $k$-junta, then the algorithm accepts.
\end{lem}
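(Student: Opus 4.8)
The plan is to show that the algorithm never rejects when $f$ is a genuine $k$-junta, i.e., that it never accumulates more than $k$ relevant blocks. The key observation is one-sidedness: every time the algorithm increments its count of relevant blocks, it does so on the basis of a \emph{certified witness}, namely an explicit pair of strings on which $f$ takes different values. So the argument is really about tracking an invariant: after each increment, the collection of blocks maintained by the algorithm has the property that each block contains at least one relevant variable of $f$, and distinct blocks are disjoint. Since $f$ has at most $k$ relevant variables, disjoint blocks each containing a relevant variable can number at most $k$, so the count never exceeds $k$ and the final line returning reject is never executed.

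I would carry this out by examining the two ways the block count increases. First, consider Line~\ref{gamma_far_increase}: this is reached only when \textbf{IsLiteral}$(z,B)$ returns reject for some newly generated block $B$. By Lemma~\ref{lem_Uniform_Testing} (the one-sided guarantee of \textbf{UniformJunta}), if $B$ were \emph{not} $\gamma$-far from a literal — in particular if the restricted function $g(u)=f(u_B z_{\bar B})$ were a literal, or constant — \textbf{IsLiteral} would have returned accept; so the fact that it rejected means $g$ depends on at least one coordinate in $B$ in a way that is not captured by fixing, i.e., $B$ contains a relevant variable of $g$, hence of $f$. Moreover the algorithm splits $B$ into two sub-blocks \emph{each of which} is certified (via the binary-search step that produced the witness $f(z)\neq f(z^{(B)})$, and the structure guaranteeing each half contains a relevant variable) to contain a relevant variable; I need to make explicit here why both halves are guaranteed to be relevant, which comes from the fact that $B$ was reached as a ``newly generated block'' from a prior successful split or search and the way the witness strings are maintained. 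Second, consider the binary-search branch: when $f(x)\neq f(y)$ with $x\sim\mathcal D$ and $y$ obtained by reversing coordinates only inside the complement of the current relevant blocks together with the ``non-literal'' halves, block binary search (as described in the ``finding a new relevant block'' paragraph) returns a block $B$ with $f(z)\neq f(z^{(B)})$ for an explicit $z$; such a $B$ necessarily contains a relevant variable, and $B$ is disjoint from all previously identified relevant blocks because $y$ differs from $x$ only outside those blocks.

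So the proof reduces to: (i) state the invariant — at the start of each iteration the algorithm holds a set of pairwise-disjoint blocks, each containing $\ge 1$ relevant variable, each accompanied by a witness pair; (ii) verify the invariant is preserved by both update rules using the one-sided guarantees of \textbf{IsLiteral} and the correctness of block binary search, neither of which can err in the ``accept''/``witness'' direction; (iii) conclude that since $f\in\mathcal J_k$ has at most $k$ relevant variables, the number of such disjoint blocks is at most $k$, so the test ``size of relevant blocks exceeds $k$'' is never triggered and the algorithm reaches the final \texttt{return accept}. I would also note that \textbf{LiteralOrientation} returning \texttt{fail} only sends control back to re-randomize (Line~\ref{ini_direc}) and never causes an increment, so it is harmless for this direction.

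The main obstacle I anticipate is the bookkeeping in step~(ii) for the split rule at Line~\ref{gamma_far_increase}: one must argue that when a block $B$ fails the literal test, the particular partition into two halves produced by the algorithm yields \emph{two} blocks each genuinely containing a relevant variable — not just that $B$ as a whole is relevant. This needs the fact that $B$ arose as a ``newly generated'' block from a preceding \textbf{LiteralOrientation}/binary-search step, so that one already has a witness localizing a relevant variable, plus an argument (essentially: a block that is $\gamma$-far from every literal but contains exactly the relevant structure forcing it must, after the algorithm's splitting procedure, distribute relevant variables across both parts, or else the sub-block with no relevant variable would itself be a literal/constant and get absorbed). If the intended reading is instead that ``the block is divided into two new relevant blocks'' is itself justified by a separate sub-procedure (a further binary search producing two witnesses), then the argument is cleaner and I would simply cite that. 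Either way, the crux is that no increment ever happens without a checkable certificate, and certificates cannot lie, so a $k$-junta is always accepted.
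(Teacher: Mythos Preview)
Your approach is correct and matches the paper's: the invariant that the maintained blocks are pairwise disjoint and each contains at least one relevant variable of $f$ bounds their number by $k$ when $f\in\mathcal{J}_k$, so the reject line is never reached. The paper's own proof is a two-sentence version of exactly this argument, asserting the invariant without the case analysis you outline; your more careful tracking of the two increment rules (and the concern about Line~\ref{gamma_far_increase}) is in fact more thorough than what the paper provides.
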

\begin{proof}
If $f$ is a $k$-junta, then the algorithm could not find more than $k$ blocks each containing at least one relevant variable. According to the design of the algorithm, it accepts.
\end{proof}

\subsection{Query Complexity Analysis of the Algorithm}


\begin{thm}
The query complexity of the algorithm could be upper bounded by $\widetilde O(k)/\epsilon$. 
\end{thm}
\begin{proof}

If \textbf{IsLiteral} returns reject for some to-be-checked relevant block, then the total number of relevant blocks increases by $1$. If \textbf{IsLiteral} returns accept for all the to-be-checked relevant blocks, then the algorithm iterates for $2(\log k + 6)/\epsilon$ number of times. The algorithm randomly partitions the to-be-checked relevant blocks into two parts, and uses \textbf{FindLiteral} to identify the part that contains the literal, which requires $O(\log k)$ number of queries. Then, the algorithm queries $f(x)$ and $f(y)$ to identify whether the function value of $x$ and $y$ are distinct. If $f(x)\neq f(y)$, the algorithm further uses binary search over blocks to identify a new relevant block among all the flipped blocks.

The query complexity of algorithm \textbf{IsLiteral} is $O(\log k \cdot \frac{1}{\gamma}) = O(\log k)$ since $\gamma = 1/8$.
The query complexity of algorithm \textbf{FindLiteral} is $O(\log k)$. Let $P$ = query complexity of algorithm \textbf{IsLiteral}, $Q$ = query complexity of algorithm \textbf{FindLiteral}, and $R$ = query complexity of \textbf{binary search}. The total query complexity is upper bounded by
\begin{align}
   c\cdot k\cdot( P + \log k/\epsilon \cdot (Q+R))
   \le O(k\cdot\log^2(k))/\epsilon
   = \widetilde O(k)/\epsilon.
\end{align}

\end{proof}

\section{Conclusions}

We propose an adaptive algorithm for junta testing under distribution-free setting with one-side error, which is suprisingly simple to analyze. The query complexity of our algorithm is $\widetilde O(k)$. Compared with the $\Omega(k\log k)$ lower bound by \cite{sauglam2018near} for junta testing under the uniform distribution, our algorithm achieves nearly optimal query complexity. Junta testing was commonly solved based on Fourier analysis. In the distribution-free setting, we have no idea about how to use similar tools. This forces us to find some approaches that do not rely on Fourier analysis. It turns out that simple random algorithms and analysis suffice to achieve optimal query complexity. A natural question is whether we could use some approaches besides the commonly used Fourier analysis to make progress on some other important and open problems?

\section{Acknowledgements}

We would like to thank Guy Kindler, Daogao Liu, Gautam Prakriya, Yuanhao Wang, Sheng Ying for helpful discussions. We thank Andrej Bogdanov for providing us a counter-example for the sub-additivity of influence in the distribution-free setting. We thank Andrej Bogdanov and Siu On Chan for their supports and encouragements. We thank Andrej Bogdanov, Siu On Chan, Krishnamoorthy Dinesh, Qinghua Ding, Zhihan Xu and Yinghuan Zhang for many helpful suggestions on improving the presentation of this work. We thank the anonymous reviewers for their valuable comments.

\bibliographystyle{alpha}
\bibliography{references}

\newcommand{\etalchar}[1]{$^{#1}$}
\begin{thebibliography}{DLM{\etalchar{+}}07}

\bibitem[AGS10]{antos2010active}
Andr{\'a}s Antos, Varun Grover, and Csaba Szepesv{\'a}ri.
\newblock Active learning in heteroscedastic noise.
\newblock {\em Theoretical Computer Science}, 411(29-30):2712--2728, 2010.

\bibitem[AKK{\etalchar{+}}05]{alon2005testing}
Noga Alon, Tali Kaufman, Michael Krivelevich, Simon Litsyn, and Dana Ron.
\newblock Testing reed-muller codes.
\newblock {\em IEEE Transactions on Information Theory}, 51(11):4032--4039,
  2005.

\bibitem[BB16]{belovs2016polynomial}
Aleksandrs Belovs and Eric Blais.
\newblock A polynomial lower bound for testing monotonicity.
\newblock In {\em Proceedings of the forty-eighth annual ACM symposium on
  Theory of Computing}, pages 1021--1032, 2016.

\bibitem[BBM12]{blais2012property}
Eric Blais, Joshua Brody, and Kevin Matulef.
\newblock Property testing lower bounds via communication complexity.
\newblock {\em computational complexity}, 21(2):311--358, 2012.

\bibitem[BKS{\etalchar{+}}10]{bhattacharyya2010optimal}
Arnab Bhattacharyya, Swastik Kopparty, Grant Schoenebeck, Madhu Sudan, and
  David Zuckerman.
\newblock Optimal testing of reed-muller codes.
\newblock In {\em 2010 IEEE 51st Annual Symposium on Foundations of Computer
  Science}, pages 488--497. IEEE, 2010.

\bibitem[Bla08]{blais2008improved}
Eric Blais.
\newblock Improved bounds for testing juntas.
\newblock In {\em Approximation, Randomization and Combinatorial Optimization.
  Algorithms and Techniques}, pages 317--330. Springer, 2008.

\bibitem[Bla09]{blais2009testing}
Eric Blais.
\newblock Testing juntas nearly optimally.
\newblock In {\em Proceedings of the forty-first annual ACM symposium on Theory
  of computing}, pages 151--158. ACM, 2009.

\bibitem[BLR93]{blum1993self}
Manuel Blum, Michael Luby, and Ronitt Rubinfeld.
\newblock Self-testing/correcting with applications to numerical problems.
\newblock {\em Journal of computer and system sciences}, 47(3):549--595, 1993.

\bibitem[BMPR16]{baleshzar2016testing}
Roksana Baleshzar, Meiram Murzabulatov, Ramesh Krishnan~S Pallavoor, and Sofya
  Raskhodnikova.
\newblock Testing unateness of real-valued functions.
\newblock {\em arXiv preprint arXiv:1608.07652}, 2016.

\bibitem[Bsh19]{bshouty2019almost}
Nader~H Bshouty.
\newblock Almost optimal distribution-free junta testing.
\newblock {\em arXiv preprint arXiv:1901.00717}, 2019.

\bibitem[BWY15]{blais2015partially}
Eric Blais, Amit Weinstein, and Yuichi Yoshida.
\newblock Partially symmetric functions are efficiently isomorphism testable.
\newblock {\em SIAM Journal on Computing}, 44(2):411--432, 2015.

\bibitem[CG04]{chockler2004lower}
Hana Chockler and Dan Gutfreund.
\newblock A lower bound for testing juntas.
\newblock {\em Information Processing Letters}, 90(6):301--305, 2004.

\bibitem[DLM{\etalchar{+}}07]{diakonikolas2007testing}
Ilias Diakonikolas, Homin~K Lee, Kevin Matulef, Krzysztof Onak, Ronitt
  Rubinfeld, Rocco~A Servedio, and Andrew Wan.
\newblock Testing for concise representations.
\newblock In {\em 48th Annual IEEE Symposium on Foundations of Computer Science
  (FOCS'07)}, pages 549--558. IEEE, 2007.

\bibitem[FKR{\etalchar{+}}04]{fischer2004testing}
Eldar Fischer, Guy Kindler, Dana Ron, Shmuel Safra, and Alex Samorodnitsky.
\newblock Testing juntas.
\newblock {\em Journal of Computer and System Sciences}, 68(4):753--787, 2004.

\bibitem[Gol10]{goldreich2010property}
Oded Goldreich.
\newblock Property testing.
\newblock {\em Lecture Notes in Comput. Sci}, 6390, 2010.

\bibitem[LCS{\etalchar{+}}19]{liu2019distribution}
Zhengyang Liu, Xi~Chen, Rocco~A Servedio, Ying Sheng, and Jinyu Xie.
\newblock Distribution-free junta testing.
\newblock {\em ACM Transactions on Algorithms (TALG)}, 15(1):1, 2019.

\bibitem[MORS10]{matulef2010testing}
Kevin Matulef, Ryan O'Donnell, Ronitt Rubinfeld, and Rocco~A Servedio.
\newblock Testing halfspaces.
\newblock {\em SIAM Journal on Computing}, 39(5):2004--2047, 2010.

\bibitem[MR10]{motwani2010randomized}
Rajeev Motwani and Prabhakar Raghavan.
\newblock {\em Randomized algorithms}.
\newblock Chapman \& Hall/CRC, 2010.

\bibitem[PRS02]{parnas2002testing}
Michal Parnas, Dana Ron, and Alex Samorodnitsky.
\newblock Testing basic boolean formulae.
\newblock {\em SIAM Journal on Discrete Mathematics}, 16(1):20--46, 2002.

\bibitem[RS96]{rubinfeld1996robust}
Ronitt Rubinfeld and Madhu Sudan.
\newblock Robust characterizations of polynomials with applications to program
  testing.
\newblock {\em SIAM Journal on Computing}, 25(2):252--271, 1996.

\bibitem[Sa{\u{g}}18]{sauglam2018near}
Mert Sa{\u{g}}lam.
\newblock Near log-convexity of measured heat in (discrete) time and
  consequences.
\newblock In {\em 2018 IEEE 59th Annual Symposium on Foundations of Computer
  Science (FOCS)}, pages 967--978. IEEE, 2018.

\bibitem[Ser10]{servedio2010testing}
Rocco~A Servedio.
\newblock Testing by implicit learning: a brief survey.
\newblock In {\em Property testing}, pages 197--210. Springer, 2010.

\bibitem[Xie18]{xie2018property}
Jinyu Xie.
\newblock {\em Property Testing of Boolean Function}.
\newblock PhD thesis, Columbia University, 2018.

\end{thebibliography}

\newtheorem *{appenlem}{Lemma 5}

\section{Appendix}

\subsection{IsLiteral Algorithm}

\begin{algorithm}[!htbp]
\caption{IsLiteral (x, B)}
\label{Framework:Algorithm for IsLiteral}
\begin{algorithmic}[1]
\STATE{Define $g(u) = f(u_Bx_{\bar B}), \gamma = 1/8$}
\FOR{each of $\log(k)+6$ repetitions}
\IF{UniformJunta($g, 1, \gamma$) rejects}
\STATE{return reject, and two blocks $L$ and $N$}
\ENDIF
\STATE{Randomly partition $B$ into $L$ and $N$}
\IF{$g(x^{(L)}) = g(x^{(N)})\neq g(x)$}
\STATE{return block $L$, block $N$,  and string $x$}
\ENDIF
\STATE{$y \leftarrow x^{(B)}$}
\IF{$g(y^{(L)}) = g(y^{(N)})\neq g(y)$}
\STATE{return block $L$, block $N$,  and string $y$}
\ENDIF
\ENDFOR
\STATE{return accept}
\end{algorithmic}
\end{algorithm}

\subsection{Lemma 5}
\begin{appenlem}\label{Appen_Literal_Lem}
If $g$ is $\gamma$-far from any literal under the uniform distribution, then with probability at least $1-1/(28k)$, the algorithm \textbf{IsLiteral} rejects, where $\gamma = 1/8$.
\end{appenlem}

\begin{proof}
If $g$ is $\gamma$-far from literal under the uniform distribution, we could analyze according to the following two cases. 

\textbf{Case 1}: $g$ is $\gamma$-far from every constant function under the uniform distribution. Since $g$ is $\gamma$-far from literal and constant functions, we know that $g$ is $\gamma$-far from $1$-junta under the uniform distribution. 
From the Lemma \ref{lem_Uniform_Testing}, the algorithm \textbf{UniformJunta} rejects with probability at least $2/3$. Then, we have that the algorithm \textbf{IsLiteral} rejects in a fixed iteration with probability at least $2/3$.

\textbf{Case 2}: $g$ is $\gamma$-close to a constant function $h$ under the uniform distribution. Without loss of generality, assume that $g(y) = h \neq g(x) $. $x^{(L)}$ could be regarded as a string uniformly sampled from distribution $\mathcal {U}$ over $\{0,1\}^B$. Let $x^{(L)} = a$, then $x^{(N)} = \bar a$. The probability that the algorithm does not return reject is at most

\begin{align}
\Pr_{a\in\mathcal{U}} P(g(a)\neq h \cup g(\bar a)\neq h)
\le \Pr_{a\in\mathcal{U}} P(g(a)\neq h) + \Pr_{a\in\mathcal{U}} (g(\bar a)\neq h)
\le 2\gamma.
\end{align}

Let $\mathcal E_i$ be the event that $g$ is rejected by the algorithm IsLiteral at iteration $i$. The probability that the algorithm rejects during $\log(k)$ iterations is $\Pr(\mathcal E_i)\ge 2/3$. By a union bound over $k$ such blocks, the probability that at least one iteration of IsLiteral rejects is
\begin{align}
\Pr(\cup_{i\in [\log(k)]}\mathcal E_i) = 1 - \Pr(\cap_{i\in [k]} \mathcal {\bar E}_i)\ge 1 - (1/3)^{\log(k)+6} = 1 - 1/(28k).
\end{align}

\end{proof}

\subsection{Inequalities}

\begin{lem}[\cite{motwani2010randomized}]
Let $X$ be the sum of $c$ i.i.d. random variables sampled from a distribution on $[0,1]$ with a mean $\mu$. For any $\delta>0$,
\begin{align}
\Pr(X- c\mu \le -\delta\cdot c\mu)\le\exp(-\delta^2 c\mu/2).
\end{align}
\end{lem}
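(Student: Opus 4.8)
The plan is to prove this by the classical exponential-moment (Chernoff) method: control $\mathbb{P}(X\le(1-\delta)c\mu)$ via Markov's inequality applied to $e^{-tX}$ for a judiciously chosen $t>0$, use independence to factorize the moment generating function, bound each factor through convexity of $x\mapsto e^{-tx}$ on $[0,1]$, optimize over $t$, and close with an elementary scalar inequality.

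First I would rewrite the target event: since $\mathbb{E}[X]=c\mu$, the statement is $\mathbb{P}(X\le(1-\delta)c\mu)\le\exp(-\delta^2 c\mu/2)$, and I may assume $\delta\in(0,1)$ (for $\delta\ge1$ the event is empty because $X\ge0$). For any $t>0$, Markov's inequality on the nonnegative variable $e^{-tX}$ gives $\mathbb{P}(X\le(1-\delta)c\mu)=\mathbb{P}(e^{-tX}\ge e^{-t(1-\delta)c\mu})\le e^{t(1-\delta)c\mu}\,\mathbb{E}[e^{-tX}]$.

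Next, writing $X=\sum_{i=1}^{c} X_i$ with the $X_i$ i.i.d., independence yields $\mathbb{E}[e^{-tX}]=(\mathbb{E}[e^{-tX_1}])^{c}$. Because $e^{-tx}$ is convex, on $[0,1]$ it lies below the chord joining $(0,1)$ and $(1,e^{-t})$, so $e^{-tx}\le 1-x(1-e^{-t})$; taking expectations and using $1+u\le e^{u}$ gives $\mathbb{E}[e^{-tX_1}]\le 1-\mu(1-e^{-t})\le\exp(-\mu(1-e^{-t}))$, hence $\mathbb{E}[e^{-tX}]\le\exp(-c\mu(1-e^{-t}))$. Substituting, $\mathbb{P}(X\le(1-\delta)c\mu)\le\exp\!\big(c\mu\,(t(1-\delta)-1+e^{-t})\big)$. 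Choosing $t=-\ln(1-\delta)>0$ makes $e^{-t}=1-\delta$ and yields $\mathbb{P}(X\le(1-\delta)c\mu)\le\big(e^{-\delta}/(1-\delta)^{1-\delta}\big)^{c\mu}$.

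Finally I would verify the scalar inequality $e^{-\delta}/(1-\delta)^{1-\delta}\le e^{-\delta^2/2}$, i.e.\ $g(\delta):=(1-\delta)\ln(1-\delta)+\delta-\delta^2/2\ge0$ on $[0,1)$. This follows since $g(0)=0$ and $g'(\delta)=-\ln(1-\delta)-\delta\ge0$, because $-\ln(1-\delta)=\delta+\delta^2/2+\cdots\ge\delta$. Combining the displays gives $\mathbb{P}(X-c\mu\le-\delta c\mu)\le\exp(-\delta^2 c\mu/2)$. The only mildly delicate steps — the convexity bound on each factor, the optimal choice $t=-\ln(1-\delta)$, and the inequality $g\ge0$ — are all routine; there is no genuine obstacle here, and one could alternatively just invoke the bound from \cite{motwani2010randomized} verbatim. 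The one point worth care is the $[0,1]$-valued (rather than Bernoulli) setting, which is precisely where the chord/convexity argument in the third paragraph is needed.
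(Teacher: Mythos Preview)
Your argument is correct: it is the standard Chernoff/exponential-moment proof, with the convexity-chord step handling the $[0,1]$-valued (as opposed to Bernoulli) case, and the scalar inequality $g(\delta)\ge 0$ closing the gap to the clean $e^{-\delta^2/2}$ form. One tiny quibble: at $\delta=1$ the event $\{X\le 0\}=\{X=0\}$ need not be empty, but your bound still follows by letting $\delta\uparrow 1$ in the sharper estimate $\bigl(e^{-\delta}/(1-\delta)^{1-\delta}\bigr)^{c\mu}\to e^{-c\mu}\le e^{-c\mu/2}$ together with right-continuity of the distribution function.

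As for comparison: the paper does not prove this lemma at all; it simply records it in the appendix with a citation to \cite{motwani2010randomized}. So your write-up supplies exactly the textbook derivation that the cited reference contains, and there is nothing further to reconcile.
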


\begin{lem}[\cite{antos2010active}]
Let $a>0$. For any $t\ge (2/a)[\log(1/a)-b]^{+}, at+b>\log(t)$.
\end{lem}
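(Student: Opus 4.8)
\emph{Proof proposal.} The plan is to sandwich $\log t$ between the linear function $at+b$ and a carefully tilted tangent line to $\log$, so that the remaining gap is controlled entirely by the hypothesis on $t$. I would start from the standard concavity bound for the logarithm: for every slope $\alpha>0$,
\[
\log t \;\le\; \alpha t - \log\alpha - 1 \qquad (t>0),
\]
with equality exactly at $t=1/\alpha$. This is immediate because $g(t)=\alpha t-\log\alpha-1-\log t$ has $g'(t)=\alpha-1/t$, so $g$ attains its minimum at $t=1/\alpha$, where $g(1/\alpha)=0$.

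The key move is to take the slope to be $\alpha=a/2$; this is precisely what forces the constant $2$ appearing in the hypothesis $t\ge (2/a)[\log(1/a)-b]^+$. With this choice, $\log t\le \tfrac a2 t-\log\tfrac a2-1=\tfrac a2 t+\log\tfrac2a-1$, and therefore
\[
at+b-\log t \;\ge\; \tfrac a2 t + b - \log\tfrac1a - \log 2 + 1 .
\]
Then I would feed in the hypothesis, which gives $\tfrac a2 t \ge [\log(1/a)-b]^+ \ge \log(1/a)-b$, so the right-hand side is at least $\big(\log\tfrac1a-b\big)+b-\log\tfrac1a-\log2+1 = 1-\log 2 > 0$. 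This yields the strict inequality $at+b>\log t$, in fact with a uniform positive slack of $1-\log 2$.

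I do not expect a genuine obstacle; the lemma is elementary once the tangent-line trick is identified. The one point that needs care is the bookkeeping around the positive part: when $\log(1/a)-b\le 0$ the hypothesis only asserts $t>0$, but the same chain of inequalities still works, since $\tfrac a2 t>0\ge \log(1/a)-b$, so $at+b-\log t\ge 1-\log 2$ holds in that regime as well. (Here $\log$ denotes the natural logarithm, consistent with the Hoeffding and Chernoff bounds used earlier; choosing slope $a$ instead of $a/2$ would kill the $t$-dependent term, which is why the factor $2$ is needed.)
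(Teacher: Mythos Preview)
Your argument is correct. The concavity bound $\log t\le \alpha t-\log\alpha-1$ with the slope choice $\alpha=a/2$ does exactly what is needed, and the bookkeeping with the positive part is handled properly; the uniform slack $1-\log 2$ is a nice byproduct. One tiny quibble: when $[\log(1/a)-b]^{+}=0$ the stated hypothesis literally allows $t\ge 0$, not $t>0$; but since $\log t$ must be defined the case $t=0$ is vacuous (or trivially true with $\log 0=-\infty$), so this does not affect the argument.

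As for comparison with the paper: there is nothing to compare. The paper does not prove this lemma at all---it is simply quoted from \cite{antos2010active} in the appendix as a ready-made inequality, used to bound the number of iterations in the \textbf{LiteralOrientation} analysis. Your self-contained proof is therefore a genuine addition rather than a paraphrase.
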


\end{document}